\newcommand{\R}{\mathbb{R}}
\newcommand{\Z}{\mathbb{Z}}
\renewcommand{\S}{\mathbb{S}}
\newcommand{\eps}{\epsilon}
\newcommand{\orient}{O}
\newcommand{\True}{\texttt{True}}
\newcommand{\False}{\texttt{False}}
\newcommand{\partition}{\mathcal{C}}
\newtheorem{theorem}{Theorem}
\newtheorem*{embtheorem}{Theorem 1}
\newtheorem{lemma}[theorem]{Lemma}
\title{%
  Embedding the dual complex\\
  of hyper-rectangular partitions
  }
\author{%
  Michael~Kerber%
  \thanks{Institute of Science and Technology Austria (IST Austria), Klosterneuburg, Austria, 
          Stanford University, Stanford, CA,
          Max-Planck-Insitute for Visual Computing and Communication, Saarbr\"ucken, Germany,  
          \texttt{mkerber@mpi-inf.mpg.de}}
}
\begin{document}
\maketitle

\begin{abstract}
A rectangular partition is the partition of an (axis-aligned) rectangle
into interior-disjoint rectangles. We ask whether a rectangular
partition permits a ``nice'' drawing of its dual, 
that is, a straight-line embedding of it 
such that each dual vertex is placed into the rectangle that it represents.
We show that deciding whether such a drawing exists is NP-complete. 
Moreover, we consider the drawing where a vertex is placed in the center
of the representing rectangle and consider sufficient conditions for this drawing
to be nice. This question is studied both in the plane 
and for the higher-dimensional generalization of rectangular partitions.
\end{abstract}

\section{Introduction}

\begin{wrapfigure}[13]{r}{3.0cm}
\vspace{-0.5cm}
\includegraphics[width=3cm]{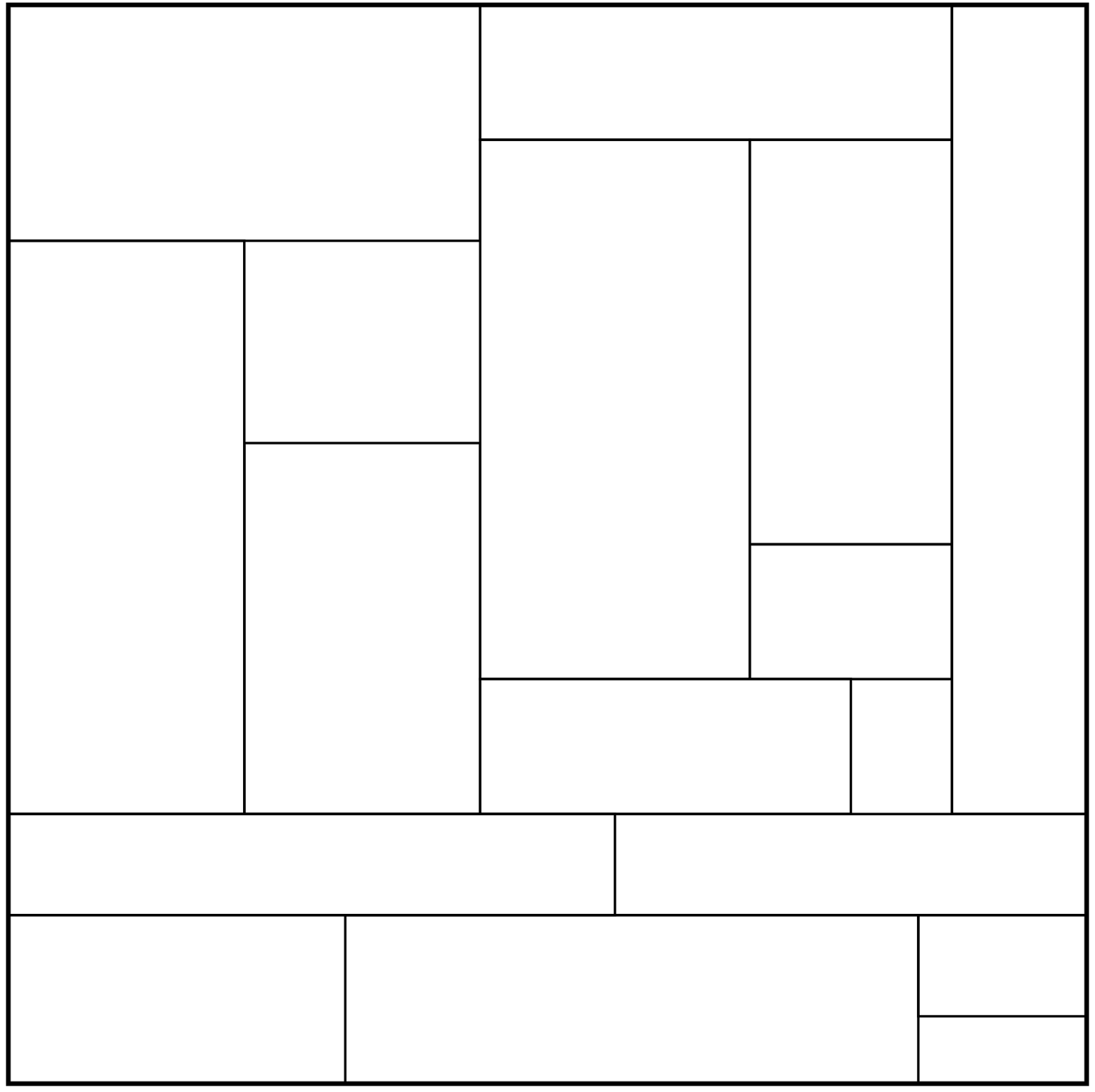}
\includegraphics[width=3cm]{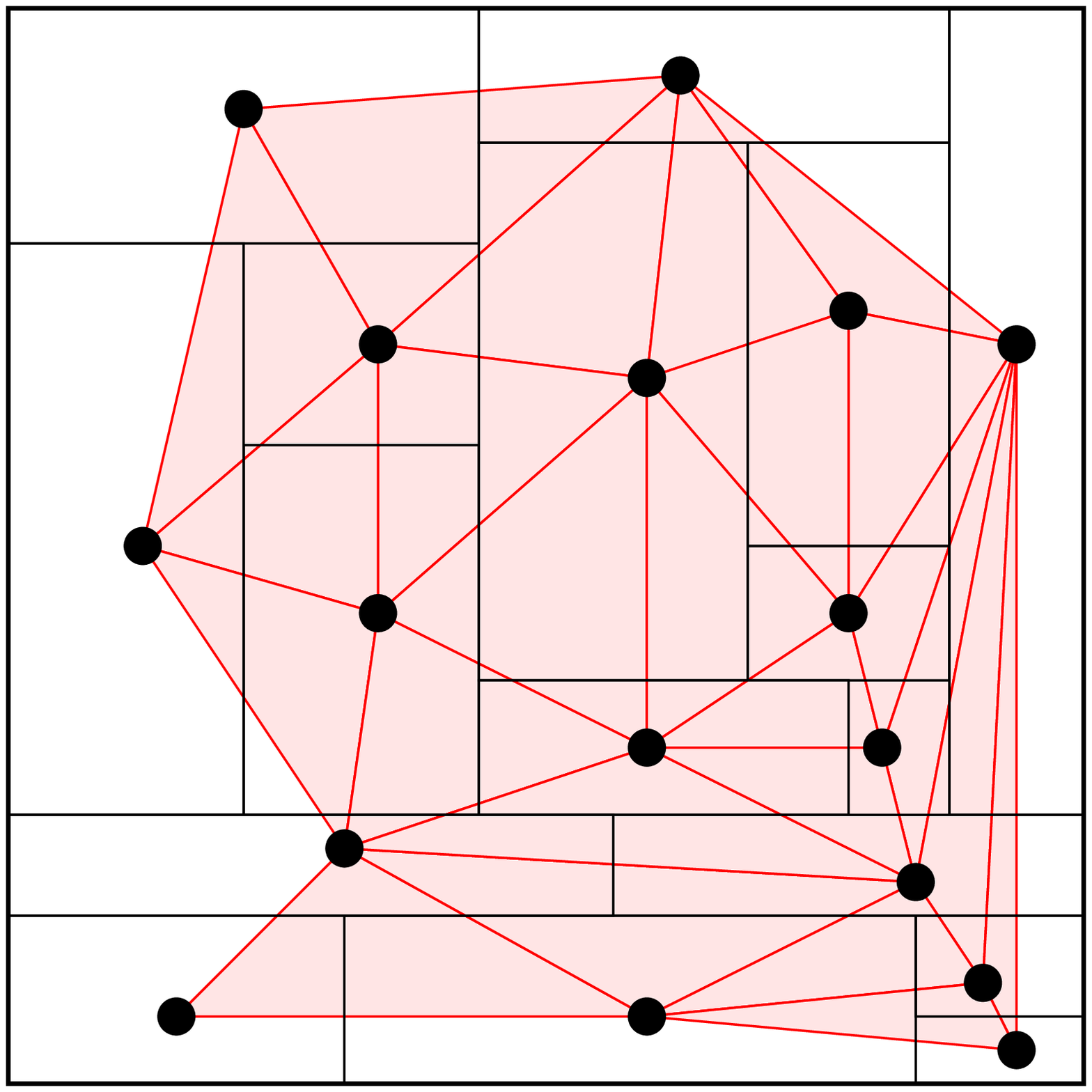}
\end{wrapfigure}
\paragraph{Problem statement and results}
We consider the partition of the $d$-dimensional cube into axis-parallel 
$d$-dimensional hyper-rectangles (or boxes)
with disjoint interiors which we call \emph{hyper-rectangular partitions}. 
The dual complex of such a partition is an abstract simplicial complex of dimension $d$
and represents the connectivity of the boxes in the partition, possibly after a slight distortion
if more than $d+1$ boxes meet in a common point. Each vertex of the dual represents
a box of the partition. 

We pose the question of whether a partition permits a ``nice'' drawing of its dual in $\R^d$. 
We impose three conditions for niceness: First of all, the drawing must be straight, that is,
a face must be drawn as the convex hull of the vertices that are on its boundary. That means that
the drawing is completely determined by the positions of the vertices. Second, we disallow
improper intersections of faces; for instance, edges are not allowed to cross except at their 
endpoints. The first two conditions can be summarized to that the drawing must yield a simplicial
complex in $\R^d$. The final condition is that each vertex must be placed in the box that it
represents; that guarantees some minimal amount of geometric closeness between
a primary cell of the partition and its dual counterpart.
If a drawing with these three conditions is possible, we call it an \emph{embedding}.
See the figure on the right for an illustration of a rectangular partition in $\R^2$ and an embedding of its dual.

We will see that not every hyper-rectangular partition permits an embedding of the dual,
so a natural question to ask is how fast can the existence of an embedding 
be decided for a given partition.
Our first main contribution is to show that this decision problem 
is NP-hard already in the planar case. More precisely, we restrict to the case that
the boxes of a partition and the vertices of the dual
are snapped to a uniform grid, and we prove that the problem
is NP-complete under this constraint; 
see Section~\ref{sec:NP} for the precise statement.

The most natural choice for placing a vertex of the dual (for arbitrary dimension $d$)
is the center of the corresponding box. 
We call a partition \emph{center-embeddable} if this vertex placement extends
to an embedding of the dual.
For instance, if all boxes are $d$-dimensional cubes of same size,
the partition is center-embeddable because its dual is simply a triangulation
of the uniform grid defined by the box centers.
We ask for simple, but less restrictive 
sufficient conditions on a partition to be center-embeddable.
We call a partition \emph{$\beta$-balanced} (with $\beta\geq 1$), if for any intersecting pair of
boxes, the ratio of the longest side of the boxes
divided by the shortest side of the boxes is at most $\beta$.
As our second main contribution, we investigate the relationships between $\beta$-balanced
and center-embeddable partitions:
In the planar case, we show that center embeddability is guaranteed 
if the partition is $(3-\eps)$-balanced, for any $\eps>0$. However, in $\R^3$ (and higher dimensions)
we can construct a $(1+\eps)$-balanced partition that is not center-embeddable. 
The situation changes if we restrict to \emph{cubical partitions}, consisting
only of $d$-dimensional cubes: In this case, $(3-\eps)$-balanced partitions are center-embeddable
in $\R^3$, and we can construct a cubical $\beta$-balanced partition in $\R^d$ 
that is not center-embeddable for any $\beta>d/(d-2)$. 
It remains open whether cubical $\beta$-balanced partitions with $\beta<d/(d-2)$
are center-embeddable in dimensions larger than $3$.

\paragraph{Related work}
Rectangular partitions in $\R^2$ have a lot of applications, for instance
 in VLSI design~\cite{lly-compact}, cartography~\cite{reisz-statistical}, and database-related applications~\cite{mps-rectangular}; 
we refer to the survey by Eppstein~\cite{eppstein-regular}
for more examples. In this context, rectangular partitions sometimes appear as the \emph{rectangular dual} of a triangulated planar graph. 
Linear-time algorithms have been presented for the computation 
of such a rectangular dual~\cite{he-finding}.
Our situation, however, is different as the rectangular partition is the input object and we ask for a straight-line embedding of its dual
where the vertices are constrained to lie inside rectangles. 
This can be seen as an instance of planar graph embedding with constraints;
NP-hardness has been shown for other constraints, such as when fixing the length of each edge~\cite{ew-fixed}, and, more related to our approach, when restricting
the placement of each vertex to a disc~\cite{godau-difficulty}. Furthermore, the problem of simultaneously embedding a planar graph and its dual on the integer grid
has received some interest, e.g.~\cite{ek-simultaneous}. Most related to our approach is the variant where the embedding of the primal graph is fixed and an embedding
of the dual is sought for, such that only primal-dual pairs of edges intersect. Bern and Gilbert~\cite{bg-drawing} show that the problem is linear-time solvable
if all faces are convex and four-sided, and becomes NP-hard for convex five-sided faces. The latter is proven with a reduction from planar 3SAT and is
similar to the proof presented in this work. However, although rectangular partitions can be seen as planar graphs with convex faces, there is no direct
reduction from our problem because we allow the dual graph to intersect the partition arbitrarily, not just at primal-dual pairs.

The higher dimensional equivalent of rectangular partitions (and their dual complexes) apparently has not been investigated 
from a theoretical point of view.
Our motivation for this topic is originated in the approximation and simplification of $d$-dimensional image data~\cite{bek-computing}. 
The idea is to identify rectangular regions in which the image looks ``similar''. In the simplest case, similarity means that the
image values inside a region are similar, but different measures can be defined. The regions define a hyper-rectangular partition
which can consist of substantially less elements than the original $n^d$ image cells. 
In some situations, a piecewise linear approximation of the image is required, e.g., for computing a level set of the image.
Standard techniques like bilinear interpolation cannot be applied because the non-uniformity of the partition leads to discontinuities;
see~\cite{wf-diamond} for a discussion with references.
The standard approach is to triangulate
the rectangular regions separately, such that triangulations of adjacent regions agree on their common boundary.
This results in many simplices for regions with a large number of neighbors.
Our dual complex construction constitutes an alternative to this standard approach; however,
it does not necessarily embed in the ambient space nicely, which leads to the question considered in this work.

A special case of rectangular partitions are \emph{hierarchical cubical subdivisions};
they arise from the initial box by a sequence of subdivisions, where a box in the partition is replaced
with $2^d$ sub-boxes of half the side length. In $2$ and $3$ dimensions, these subdivisions are called \emph{quad-trees}
and \emph{oct-trees}, respectively. A hierarchical cubical subdivisions is called \emph{balanced}
if adjacent boxes differ at most by a factor of $2$ in side length.
Note that balanced subdivisions are special cases of $2$-balanced partitions in our notation.
In~\cite{bek-computing},  the dual complex of an oct-tree is used to approximate the \emph{persistent homology} 
(see~\cite{eh-computational} for a definition) of the underlying image. 
It was also shown that the dual complex is center-embeddable, provided that the oct-tree is balanced.
In~\cite{ek-freudenthal}, that result is generalized to general hierarchical cubical subdivisions;
our results show that the hierarchical structure is crucial for obtaining this result,
as for all $d\geq 5$, we can construct a non-hierarchical $2$-balanced partition that is not center-embeddable.

\paragraph{Outline}
We introduce hyper-rectangular partitions and their dual complexes formally in Section~\ref{sec:definitions}.
Section~\ref{sec:NP} is devoted to the NP-completeness proof of finding an embedding in the planar case.
We study center-embeddability in Section~\ref{sec:balancing}. Section~\ref{sec:conclusion} concludes
the paper.

\section{Hyper-Rectangular Partitions and Dual Complexes}
\label{sec:definitions}

This section introduces the most important concepts needed for the results
of this work. 

We call a point set of the form $[a_1,b_1]\times\ldots\times [a_d,b_d]$
with $a_i<b_i$ and $a_1,\ldots,a_d,b_1,\ldots,b_d\in\Z$ an \emph{integral hyper-rectangle},
or just a \emph{box}, with \emph{lengths} $b_1-a_1,\ldots,b_d-a_d$.
An \emph{integral hyper-square}, or \emph{square box} 
is a box where all lengths are equal.
We can think of a box to be composed out of \emph{hyper-pixels}, 
or \emph{unit boxes} which are integer translates of $[0,1]^d$.
Let $B=[0,n]^d$ be a square box with an arbitrary $n>0$.
A \emph{hyper-rectangular partition} $\partition=(R_1,\ldots,R_m)$ (with $m\leq n^d$) 
of $B$ is a collection of boxes $R_i$ such that their union equals $B$ and 
their interiors are disjoint. If the union of the $R_i$ is only a subset of $B$, 
we call the collection a \emph{partial hyper-rectangular partition}. 
We usually omit the term ``hyper-rectangular'' and just talk about a \emph{partition}.
A partition is called \emph{generic} if not more than $d+1$ boxes 
intersect in a common point.

\begin{figure}
\begin{center}
\includegraphics[width=12cm]{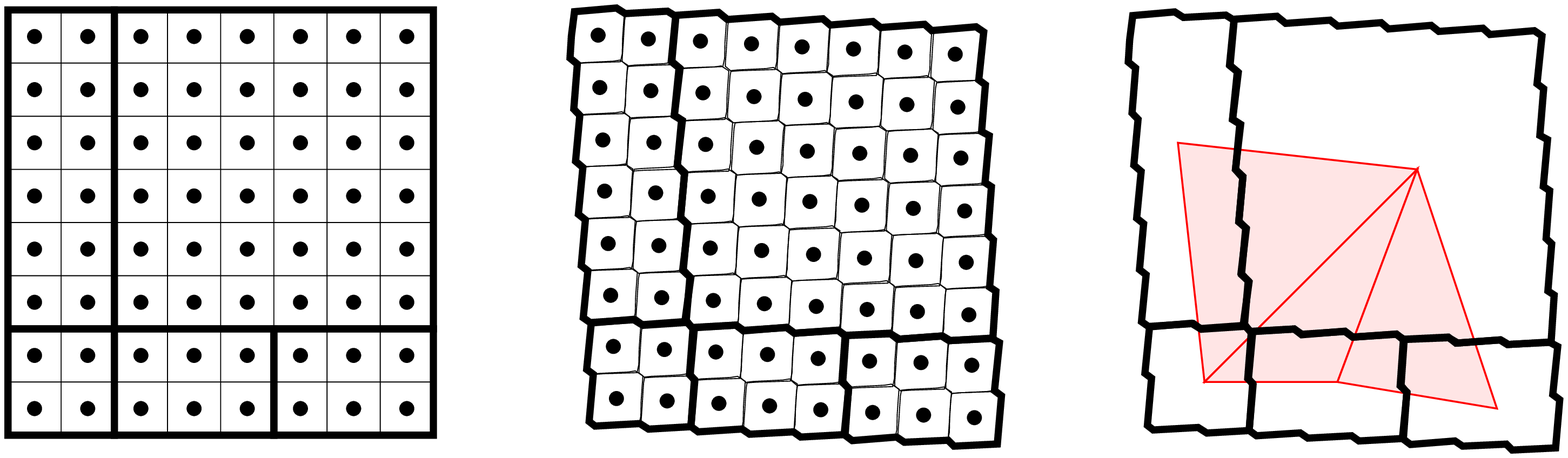}
\end{center}
\caption{Left: A non-generic rectangular partition with 5 rectangles. Middle: 
The distorted pixels and rectangles. Right: The dual complex of the partition.}
\label{fig:distortion}
\end{figure}

We associate an abstract $d$-dimensional \emph{dual complex} $D(\partition)$ to a partition $\partition$.
For a generic partition, $D(\partition)$ is just the \emph{nerve} of $\partition$, that is, each box $R_i$
is represented by a vertex $v_i$, and the simplices of $D(\partition)$ correspond to subsets
of $\{R_1,\ldots,R_m\}$ with non-empty intersection. This construction fails for non-generic
partitions because intersections of $d+2$ rectangles give rise to a $(d+1)$-simplex in the nerve.
To circumvent this problem, we slightly perturb the boxes to obtain a generic
situation, following the construction from~\cite{ek-freudenthal}: The union of unit boxes can
be seen as the Voronoi diagram of the pixel centers 
$P:=\{(x_1+1/2,\ldots,x_d+1/2) \mid x_1,\ldots,x_d\in\Z\}$.
For an arbitrary $\eps\in (0,1)$, we define the \emph{distortion} of a point $p=(x_1,\ldots,x_d)\in P$ as
\[T_\eps p := (x_1-\eps\frac{\Sigma x_i}{2},\ldots,x_d-\eps\frac{\Sigma x_i}{2}).\]
We consider the Voronoi diagram of the distorted pixel centers. 
For a pixel $U$, we define its distortion $\tilde{U}$ as the Voronoi cell of its 
distorted pixel center. For a box consisting
of pixels $U_1,\ldots,U_k$, we define its distortion
as the union of the distorted pixels $\tilde{U_1},\ldots,\tilde{U_k}$. Finally, we define
the dual complex $D(\partition)$ of a partition $\partition$ to be the nerve of the distorted boxes
in the partition. See Figure~\ref{fig:distortion} for an example in the plane. We remark that
the dual complex does not depend on the choice of $\eps$.
Informally, the distortion is a way to remove high-dimensional simplices from the nerve in
non-generic situations, thereby ``preferring'' connections between hyper-rectangle in the
diagonal direction $(1,\ldots,1)$; we refer to~\cite{ek-freudenthal} for more details.

For a hyper-rectangular partition $\partition=(R_1,\ldots,R_k)$, let $v_i$ denote
the vertex of the dual complex $D(\partition)$ that represents $R_i$. We call a mapping from
$\{v_1,\ldots,v_k\}$ to $\R^d$ a \emph{projection}. 
The \emph{barycentric refinement} of the integer grid is the grid whose vertices
are of the form $(a_1/2,\ldots,a_d/2)$ with $a_1,\ldots,a_d\in\Z$.
A projection is \emph{half-integral} if each $v_i$ is mapped to a vertex of the barycentric
refinement of the integer grid.
A projection is \emph{faithful} if each $v_i$ is mapped in the interior of $R_i$.
We assume in this work that projections are half-integral and faithful, unless otherwise stated.
A projection $\pi$ extends to the whole dual complex $D(\partition)$ by mapping a higher-dimensional simplex
to the interior of the convex hull of the projected boundary vertices. 
Abusing notation, we let $\pi$ also denote the extended mapping.
We call a projection an \emph{embedding},
if this mapping is injective, or in other words, if the image 
is a simplicial complex in $\R^d$. 

\paragraph{Seed configurations}
Consider a $d$-simplex $\sigma$ of $D(\partition)$, dual to the intersection point $w$
of the boxes $R_0,\ldots,R_d$.
By definition of $D(\partition)$, there is a \emph{unique} collection of unit boxes
$U_0,\ldots,U_d$ with $U_i\subseteq R_i$ such that the distorted
unit boxes $\tilde{U}_0,\ldots,\tilde{U}_d$ intersect \emph{in a common point} as well.
By definition, the $U_i$ intersect in $w$.
Let $u_i$ be the center of $U_i$. We call $(u_0,\ldots,u_d)$
the \emph{seed configuration} of $\sigma$.

In $\R^d$, a sequence of $d+1$ points
$p_0=(x_{0,1},\ldots,x_{0,d}),\ldots,p_d=(x_{d,1},\ldots,x_{d,d})$ 
has the \emph{orientation} 
\[
\orient(p_0,\ldots,p_d)=\mathrm{sign}\ \mathrm{det}
\left(
\begin{array}{cccc}
1 & x_{0,1} & \ldots & x_{0,d}\\
\vdots & & \ddots & \\
1 & x_{d,1} & \ldots & x_{d,d}
\end{array}
\right).
\]
\begin{wrapfigure}[10]{r}{4cm}
\vspace{-0.3cm}
\includegraphics[width=4cm]{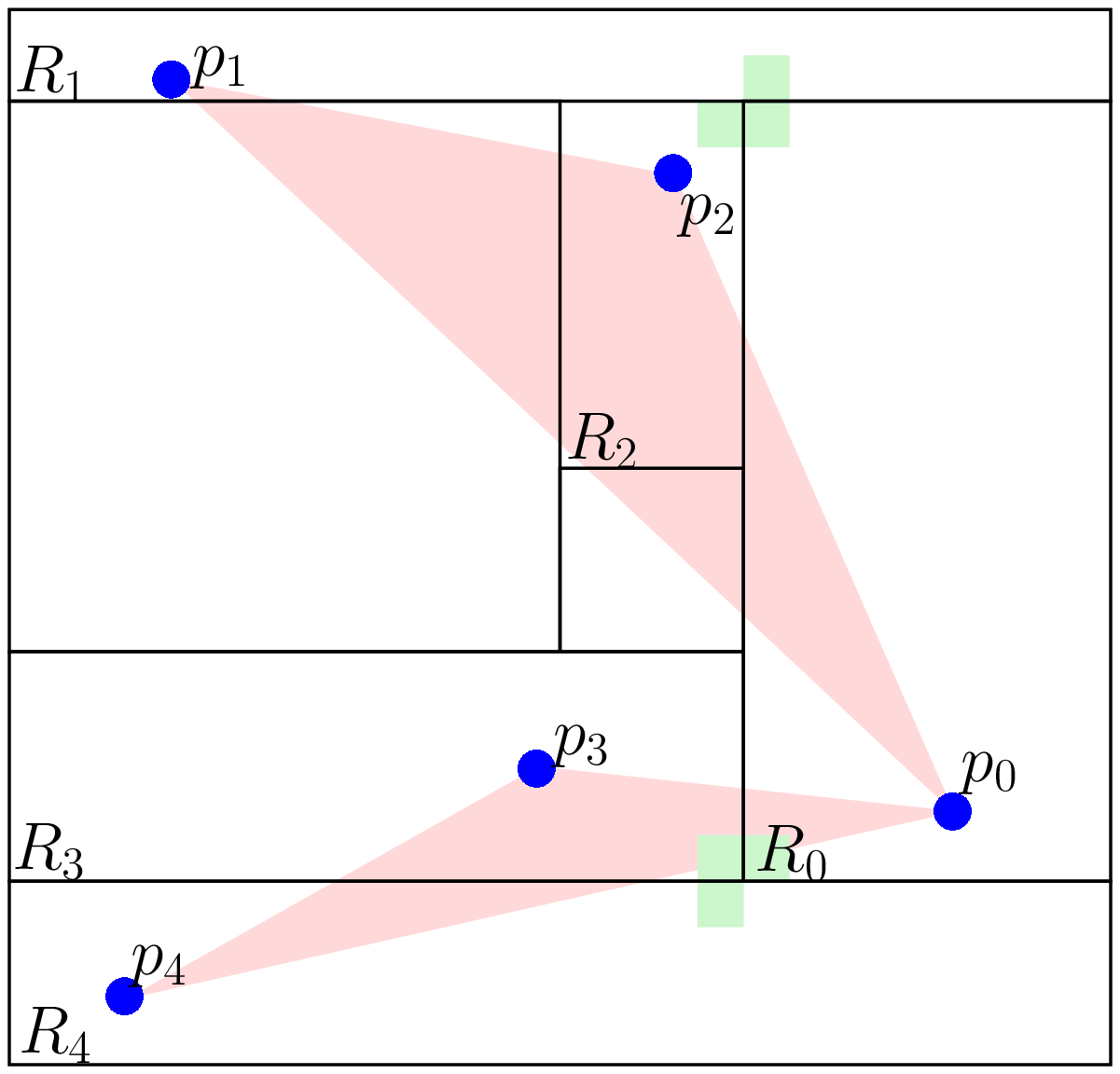}
\end{wrapfigure}
We say that a projection $\pi$ of $D(\partition)$ to $\R^d$ \emph{preserves the orientation
of the $d$-simplex $\sigma$}, if $\orient(\pi(v_0),\ldots,\pi(v_d))=\orient(u_0,\ldots,u_d)$,
where $u_0,\ldots,u_d$ is the seed configuration of $\sigma$.
For an equivalent formulation in the plane, assume that $R_0$, $R_1$, $R_2$ are arranged 
counterclockwisely
around $w$. Then $\pi$ preserves the orientation of $\sigma$, if the cycle
$\pi(v_0), \pi(v_1), \pi(v_2)$ is counterclockwisely arranged, too.
In the picture on the right, the seed configuration of the dual $2$-simplices 
$\{R_0,R_1,R_2\}$ and $\{R_0,R_3,R_4\}$ is illustrated; note that the given projection
does not preserve the orientation of the former $2$-simplex, but does preserve the orientation of the latter.

The following result is a generalization of the ``Geometric Realization Theorem''
from~\cite{ek-freudenthal}. In there, it was proven that for balanced hierarchical cubical subdivisions,
the projection that maps a vertex to the center of the cube is always an embedding. The main property
exploited in the proof is the ``Orientation Lemma'', stating that the orientation of each $d$-simplex
is preserved for such subdivisions. This is no longer true for arbitrary partitions, but the same
proof idea can be used to show:

\begin{theorem}[Embedding Theorem]\label{thm:embedding}
Let $\partition$ be a partition.
Let $\pi$ be a projection of $D(\partition)$ that preserves
the orientation of at least one $d$-simplex.
Then, $\pi$ is an embedding if and only if 
it preserves the orientation of each $d$-simplex.
\end{theorem}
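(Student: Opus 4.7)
The plan is to prove each direction of the biconditional separately, with both arguments hinging on a local adjacency lemma relating the seed-configuration orientations of two $d$-simplices that share a $(d-1)$-face, together with the connectivity of the ``dual adjacency graph'' of $D(\partition)$ whose nodes are the $d$-simplices and whose edges are shared $(d-1)$-faces. Since the distorted boxes cover $B$ nicely, $|D(\partition)|$ is a $d$-pseudomanifold with boundary corresponding to $\partial B$, so this graph is connected.

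For the forward direction, I would assume $\pi$ is an embedding that preserves the orientation of some $d$-simplex $\sigma_\star$ and propagate the property to a neighbour $\sigma'$ across a shared $(d-1)$-face $\tau$. The comparison involves two pieces of information: the seed-orientation sign of $\sigma_\star$ relative to $\sigma'$ (an intrinsic quantity computable from the unit-box centres surrounding the two meeting points of boxes), and the $\pi$-orientation sign of $\sigma_\star$ relative to $\sigma'$ (determined by the fact that $\pi(\sigma_\star)$ and $\pi(\sigma')$ are interior-disjoint and share the $(d-1)$-face $\pi(\tau)$). An adjacency lemma asserting that these two signs satisfy the same parity relation then shows that preservation at $\sigma_\star$ forces preservation at $\sigma'$, and iteration along paths in the connected adjacency graph yields the property for every $d$-simplex.

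For the backward direction, assume all orientations are preserved. I would adapt the signed-covering argument behind the Geometric Realization Theorem in~\cite{ek-freudenthal}. Fix a generic $q\in\R^d$ and let $N(q)$ be the number of $d$-simplices of $D(\partition)$ whose $\pi$-image contains $q$ in its interior. The orientation hypothesis, together with the same adjacency lemma, forces $N$ to change by $+1$ on one side of the image of an interior $(d-1)$-face and $-1$ on the other (net change $0$), and by exactly $\pm 1$ across a boundary $(d-1)$-face. Since $N\equiv 0$ outside a bounded region, this pins $N\in\{0,1\}$ throughout $\R^d$, which is precisely the injectivity of $\pi$ on the $d$-simplex interiors; injectivity on lower-dimensional strata then follows by induction over the skeleta.

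The main obstacle will be the adjacency lemma itself. The two seed configurations of $\sigma$ and $\sigma'$ use different unit-box centres even inside their common boxes, so the comparison is not immediate and will require a careful local analysis of the unit boxes around the two meeting points and their interaction with the distortion $T_\eps$. A secondary subtlety in the backward direction is the behaviour of $N$ near non-generic loci in $\R^d$, where several images of $(d-1)$-faces meet along a common $(d-2)$-stratum; I expect this to reduce, by grouping the incident $d$-simplices into adjacent pairs, to the same local lemma. Once that lemma is established, both halves of the theorem follow by essentially the same propagation argument.
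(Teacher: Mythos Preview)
Your forward direction (embedding $\Rightarrow$ all orientations preserved) is essentially the paper's argument in contrapositive form: both rest on the adjacency fact that the seed configurations of two $d$-simplices sharing a $(d-1)$-face have opposite signs, together with connectivity of the dual adjacency graph. The paper states this fact directly (with reference to Alexandroff--Hopf) and then exhibits a pair of face-adjacent simplices, one preserved and one not, whose $\pi$-images land on the same side of $\pi(\tau)$; your propagation version is equivalent.

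Your backward direction has a real gap. From ``$N$ is unchanged across images of interior $(d-1)$-faces, changes by $\pm 1$ across images of boundary $(d-1)$-faces, and vanishes outside a bounded region'' you \emph{cannot} conclude $N\in\{0,1\}$: a path from infinity may cross several boundary-face images with the same sign, and nothing you have said prevents $N=2$ or more in some region. Equivalently, you are trying to use a degree argument without controlling the restriction of $\pi$ to $\partial D(\partition)$; for a map of a ball into $\R^d$ the degree is only well-defined relative to the boundary image, and faithfulness alone does not force $\pi(\partial D(\partition))$ to be simple enough. The paper handles exactly this point by first attaching two layers of unit boxes to $\partial B$ and extending $\pi$ by sending the new vertices to their centers, then compactifying both domain and target to $\S^d$. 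The Nerve Theorem (via the contractibility of distorted-box intersections) certifies that the compactified dual triangulates $\S^d$, so $\pi$ becomes a map $g:\S^d\to\S^d$ with a globally defined degree; inside the added pixel layers the degree is visibly $1$, hence it is $1$ everywhere, and together with the orientation hypothesis this gives $N\equiv 1$ on the interior. That boundary-control step is the missing ingredient in your sketch; once you insert it, the rest of your covering argument goes through.
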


The proof in~\cite{ek-freudenthal} relies on some topological concepts;
Appendix~\ref{app:embedding} repeats the argument for the convenience of the reader.
The constraint that $\pi$ preserves at least one $d$-simplex
is required to rule out pathological cases such as a partition in $\R^2$
with only $3$ rectangles, whose dual is always an embedding regardless of the orientation
of the dual triangle (unless the triangles degenerate to a line).
All our results are eventually reduced to investigate
the orientation of projected $d$-simplices: Embedability results will be proved
by showing that each $d$-simplex preserves orientation, non-embedability results
by constructing examples where some $d$-simplex does not preserve orientation.
In all these constructed examples, it will be easy to verify
that at least one $d$-simplex preserves orientation; therefore we will tacitly
ignore that assumption when applying Theorem~\ref{thm:embedding}.

\section{NP-completeness}
\label{sec:NP}

In this section, we concentrate on the planar case $d=2$. We adapt our notation
by leaving out the prefix ``hyper-'' on all defined terms.

\begin{figure}
\begin{center}
\includegraphics[width=6cm]{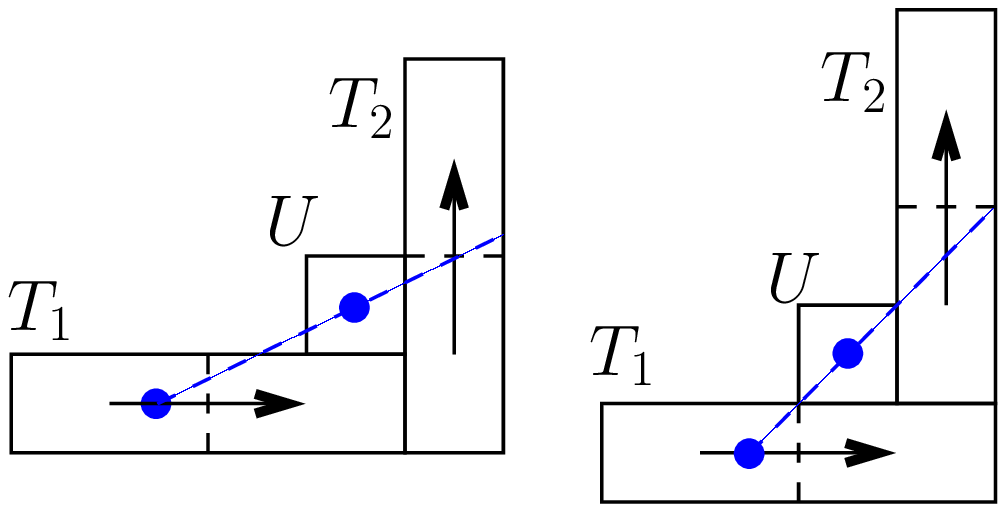}
\hspace{2cm}
\includegraphics[width=6cm]{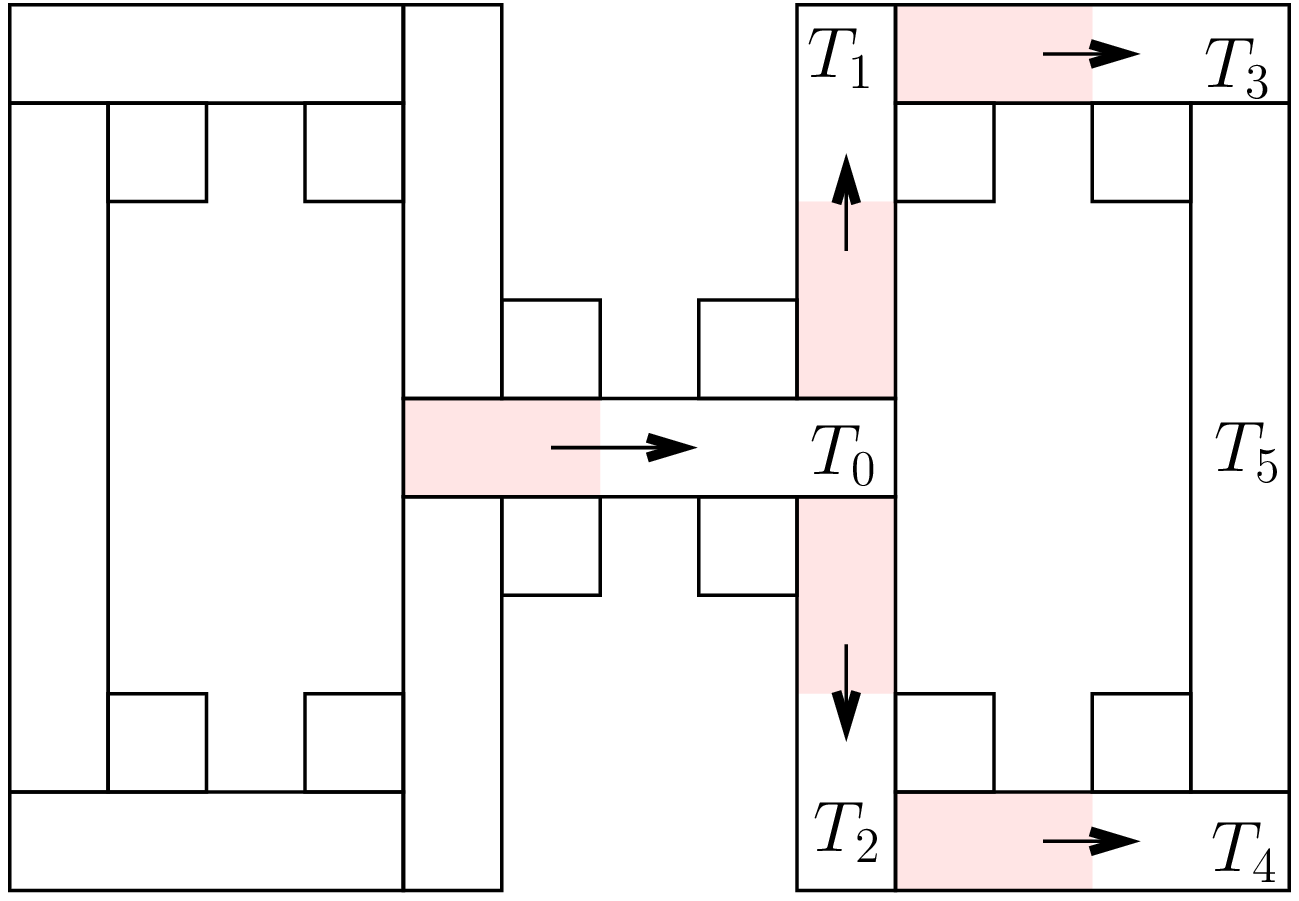}
\end{center}
\caption{Left: The two possible configurations of L-joints (up to rotations and reflections).
Right: A (partial) partition that does not permit an embedding.}
\label{fig:L-joint}
\label{fig:counter}
\end{figure}

\paragraph{Thin rectangles and L-joints}
We call an integral rectangle \emph{thin} if one of its side has length $1$, and the 
other side has length at least $4$.
The \emph{length} $\ell$ of a thin rectangle is the length of its longer side. 
Clearly, any projection has to map the vertex of a thin rectangle 
to one of the $\ell$ pixel centers.
Three intersecting rectangles are called an \emph{L-joint} if two of the rectangles
are thin and their union is L-shaped (up to rotation and reflection), and the third
rectangle is the pixel in the bulge of the L-shape; see Figure~\ref{fig:L-joint} (left).
If the L-joint belongs to a partition, it induces a $2$-simplex in $D(\partition)$.
An \emph{L-path} is a sequence of thin rectangles such that two consecutive elements
form an L-joint.
Such an L-path can be cyclic, in which case we call it an \emph{L-cycle}.
It is convenient to think of the thin rectangles in an L-joint being directed consistently
towards one end of the L-shape. If the direction is fixed, it makes sense to talk
about the \emph{front rectangle} and \emph{back rectangle} of the L-joint, 
about the \emph{front half} and \emph{back half} of each thin rectangle, and about 
the \emph{front pixel} and the \emph{back pixel} 
which are the extremal pixels in a thin rectangle.

\begin{lemma}[L-joint lemma]\label{lem:L-joint}
Let $(T_1,T_2,U)$ be a directed L-joint, with $T_1$ being the back rectangle, $T_2$
being the front rectangle, and $U$ being the bulge pixel. Let $\pi$ be a projection
that preserves the orientation of the $2$-simplex induced by the L-joint.
Then, if $\pi$ maps the vertex of $T_1$ to to its back half, it also maps the vertex of $T_2$
to its back half.
\end{lemma}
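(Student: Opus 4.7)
The plan is to place the L-joint in explicit coordinates, read off the seed configuration of the induced $2$-simplex, and collapse the orientation-preservation condition to a single algebraic inequality. After applying a rotation and a reflection, I may assume $T_1 = [0,1]\times[0,L_1]$, $T_2 = [1,L_2+1]\times[0,1]$, $U = [1,2]\times[1,2]$, with the L directed from the back end $y=L_1$ of $T_1$, through the concave corner $(1,1)$, to the front end $x=L_2+1$ of $T_2$; thinness gives $L_1,L_2\geq 4$.

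First I identify the seed pixels of $\{T_1,T_2,U\}$. At the grid vertex $(1,1)$ the four unit pixels $P_{ij}$ with $i,j\in\{0,1\}$ meet; the distortion $T_\eps$ shifts each pixel center in the $-(1,\ldots,1)$ direction by an amount proportional to its coordinate sum, which makes $\tilde P_{00}$ and $\tilde P_{11}$ adjacent while separating $\tilde P_{01}$ from $\tilde P_{10}$. The unique post-distortion Voronoi vertex incident to three distinct boxes of the L-joint is therefore the one shared by $P_{00}\subset T_1$, $P_{10}\subset T_2$, and $P_{11}=U$, giving seed $(u_1,u_2,u_U)=((0.5,0.5),(1.5,0.5),(1.5,1.5))$ with $\orient(u_1,u_2,u_U)=+1$ by a direct $3\times 3$ determinant.

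Since $U$ is a single pixel, $\pi(v_U)=(1.5,1.5)$ is forced; I write $\pi(v_1)=(0.5,b)$ and $\pi(v_2)=(a,0.5)$ for half-integral $b\in\{0.5,\ldots,L_1-0.5\}$ and $a\in\{1.5,\ldots,L_2+0.5\}$. Row-reducing the orientation determinant of $(\pi(v_1),\pi(v_2),\pi(v_U))$ and requiring positivity collapses, after collecting terms, to
\[
(a-1.5)(b-1.5) < 1.
\]
If $\pi(v_1)$ lies in the back half of $T_1$, then $L_1\geq 4$ gives $b\geq\lfloor L_1/2\rfloor+0.5\geq 2.5$, so $b-1.5\geq 1$; the inequality forces $a-1.5<1$, and half-integrality combined with $a\geq 1.5$ pins $a=1.5$. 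This places $\pi(v_2)$ at the back pixel of $T_2$, \emph{a fortiori} in its back half.

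The second L-joint configuration in Figure~\ref{fig:L-joint}(left) is handled by an entirely analogous argument: the seed orientation flips sign and the resulting inequality picks up a compensating sign change, but $L_2\geq 4$ again collapses the admissible $a$-values to a unique back-most value. The main bookkeeping step throughout is reading off the correct seed pixels from the distortion pattern at $(1,1)$; once those are fixed, the entire argument reduces to a short $3\times 3$ determinant computation plus the elementary consequence $(a-1.5)(b-1.5)<1$ combined with thinness of $T_1$.
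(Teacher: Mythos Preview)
Your proof is correct and follows the same strategy as the paper---directly checking that orientation preservation at the L-joint forces the vertex of $T_2$ into its back half---but you carry out the argument explicitly and algebraically where the paper relies on inspecting Figure~\ref{fig:L-joint}. The compact inequality $(a-1.5)(b-1.5)<1$ is a clean restatement of what the paper leaves implicit in the dashed line of the figure; one small remark is that half-integrality in the paper's sense allows $a,b$ at all multiples of $1/2$, not just pixel centers, so your conclusion ``$a=1.5$'' should really read ``$a\in\{1.5,2.0\}$'', but both values lie in the back half of $T_2$ since $L_2\geq 4$, so the lemma's conclusion is unaffected.
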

\begin{proof}
Figure~\ref{fig:L-joint} (left) displays the two possible connections of $T_1$ and $T_2$,
if both are of length $4$. 
Recall that $\pi$ is assumed to be half-integral and, therefore, has only $4$ choices to
place the vertex for both $T_1$ and $T_2$.
We can see that, if $\pi$ preserves the orientation
and maps the vertex of $T_1$ into its back half 
it must map the vertex of $T_2$ below the dashed line, so this vertex
must go into the back half in
both cases. The situation does not change if we extend $T_1$ to the left or $T_{2}$
to the top.
\end{proof}

We can use L-joints to show that dual complexes cannot always be embedded:
Consider the (partial) rectangular partition in Figure~\ref{fig:counter} (right).
Let $\pi$ be any projection of $D(\partition)$. 
Assume for a contradiction that $\pi$ preserves the orientation
of all $2$-simplices induced by the partial partition. 
Assume w.l.o.g.\ that the vertex of $T_0$ is projected
to the left half of $T_0$ (otherwise, the symmetric argument applies).
Direct the thin rectangles according to the arrows in the figure.
By the L-joint lemma, the vertices of $T_1$ and $T_2$ must be placed in their
back halves, and by repeating the argument, the same holds for the rectangles $T_3$ and $T_4$.
Using the L-joint Lemma again, the vertex of $T_5$ must be placed in the upper half of $T_5$
(caused by the L-joint with $T_3$) and also in the lower half of $T_5$ (caused by the L-joint
with $T_4$), a contradiction.
Filling out the partial partition with pixels, we obtain a full partition $\partition$ such that
no projection preserves the orientation of all triangles. The Embedding Theorem (Theorem~\ref{thm:embedding}) asserts
that there is no embedding.

\paragraph{The reduction}
We define the decision problem \texttt{Faithful\_Embed} as follows:
For a partition
$\partition=(R_0,\ldots,R_m)$ of $[0,n]\times [0,n]$, is there an embedding of $D(\partition)$?
Our goal is to show NP-completeness of this problem. 

First of all, the problem is clearly in NP: Given a partition $\partition$, 
we can compute $D(\partition)$ and check whether a specific projection causes an orientation switch 
for any triangle in polynomial time.

For the reduction, 
we define the \texttt{grid3sat} problem in the same way as in~\cite{godau-difficulty}:
\begin{quotation}
\texttt{grid3sat}: 
Given some $N\times N$ grid (with $N$ linear in $n$) in which some grid points are labeled
as clauses and some as variables. Variables are connected with clauses by vertex-disjoint
paths on the grid. A sign is associated to every such path indicating whether the corresponding
variable is negated in the clause or not. Every clause is incident to exactly three paths.
Is the formula described in this way satisfiable or not? 
\end{quotation}

See Figure~\ref{fig:grid3sat} (left) for an example.
A consequence of the formulation is that every variable appears in at most four clauses.
Therefore, the problem is a variant of the \emph{planar 3,4-SAT} problem
(see~\cite{godau-difficulty})
with the restriction that all paths follow grid edges. 

\begin{figure}
\centering
\includegraphics[width=4.2cm]{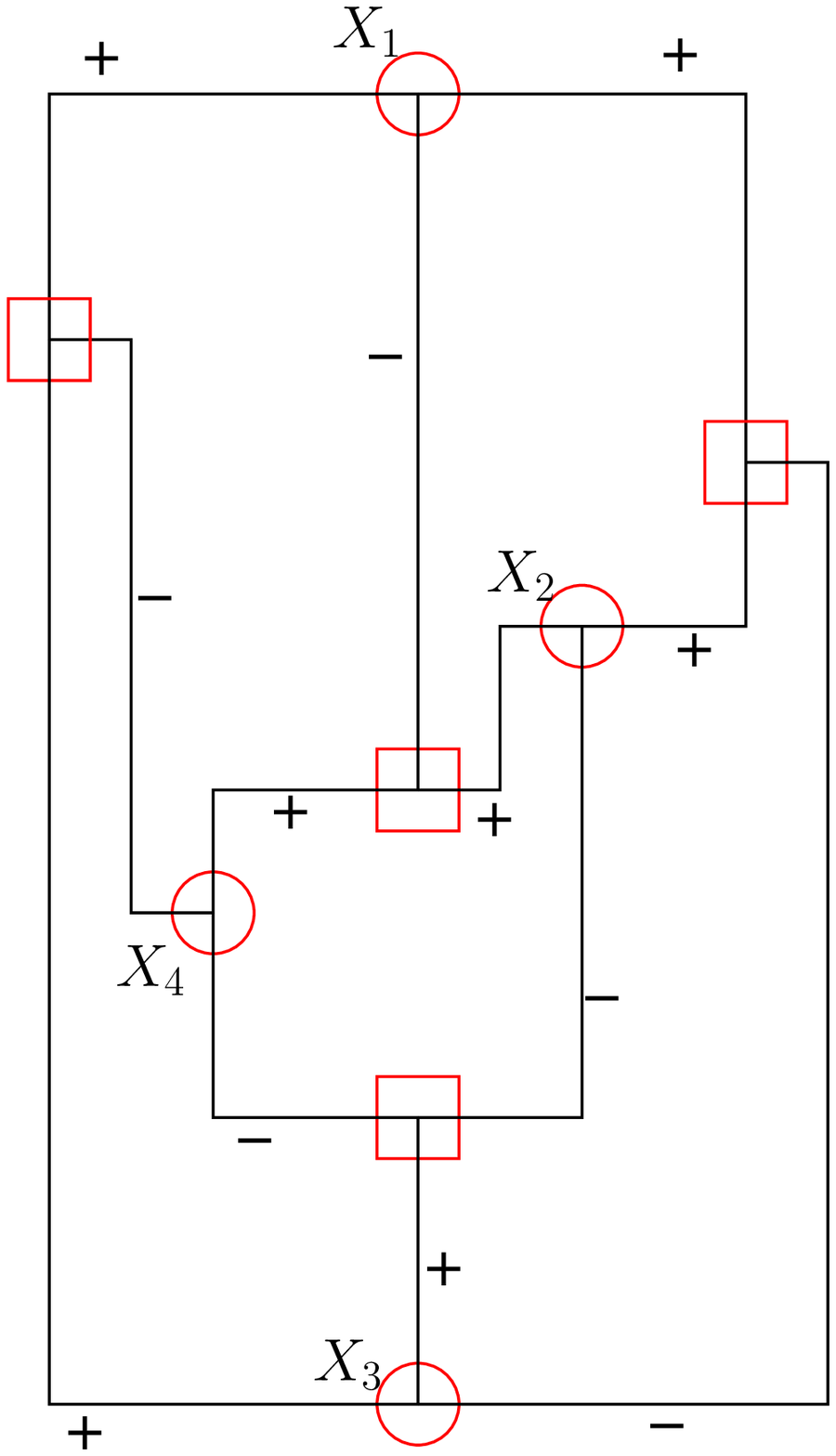}
\hspace{1cm}
\includegraphics[width=7.8cm]{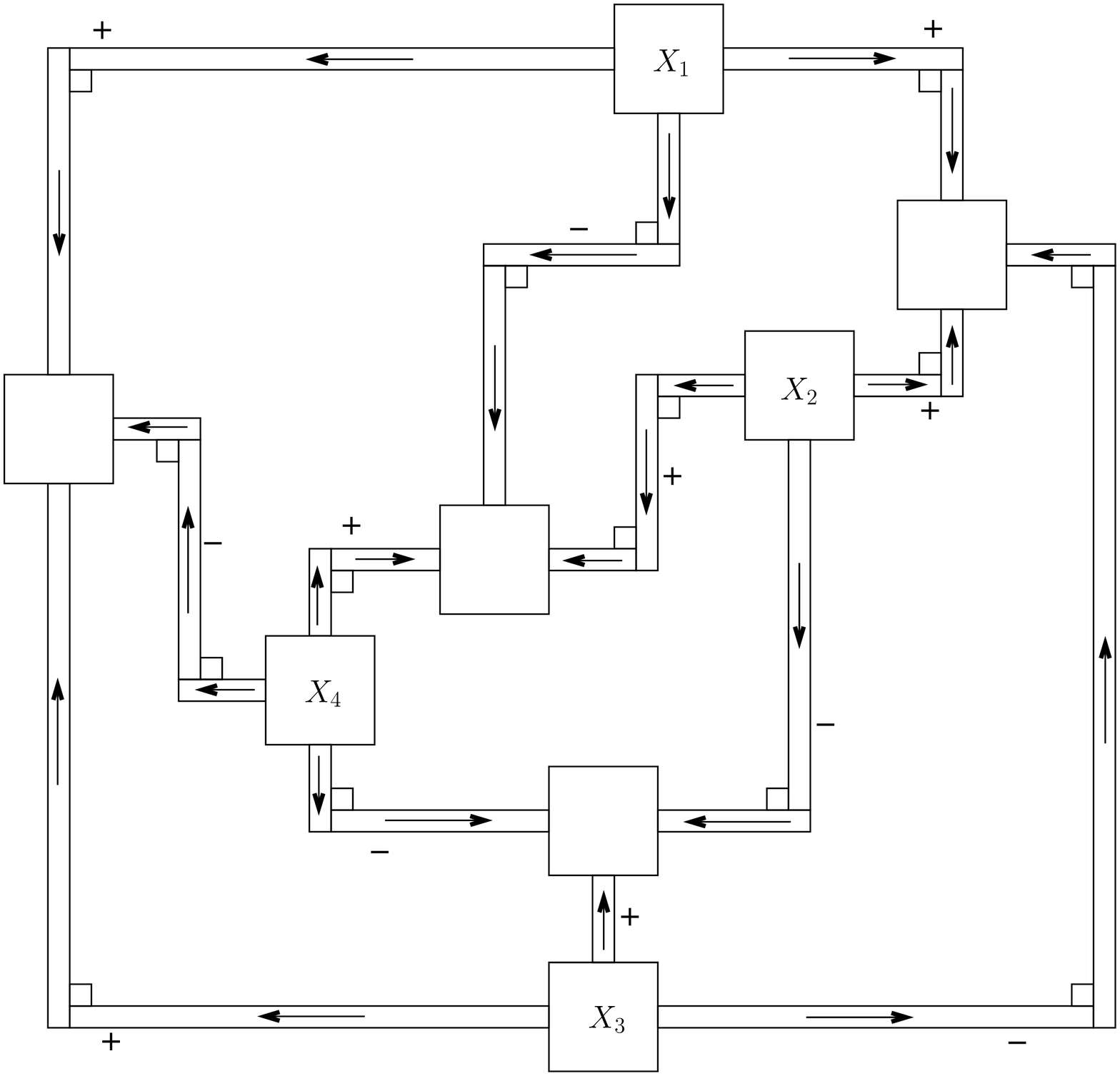}
\caption{Left: A graph presenting the formula $(X_1\vee X_2\vee \bar{X}_3)\wedge(\bar{X}_1\vee X_2\vee X_4)\wedge(X_1\vee\bar{X}_3\vee\bar{X}_4)\wedge(\bar{X}_2\vee X_3\vee\bar{X}_4)$.
Right: A corresponding representation as a partial integral rectangular partition, with
L-path connections.}
\label{fig:grid3sat}
\end{figure}

As stated in~\cite{godau-difficulty}, \texttt{grid3sat} is NP-complete.
Consider an instance $I$ of \texttt{grid3sat},
which is a graph on the integer grid representing a formula $F$ (in 3-CNF form).
Based on this graph, we will construct a rectangular partition $\partition$ such that the existence
of a satisfying assignment for $F$ is equivalent to the existence of an
embedding of $D(\partition)$. 

We barycentrically refine (see Section~\ref{sec:definitions}) the grid
given by $I$ a constant number of times 
(in fact $5$ barycentric refinements suffice) and construct our partition inside
this base grid. For that, we first replace vertices representing variables or clauses
by boxes of sufficient size (which we fill later), and we replace paths connecting
variables and clauses by disjoint L-paths. 
Each path inherits the sign of the corresponding path in $I$.
We think of L-paths being
directed from variables to clauses. See Figure~\ref{fig:grid3sat} (right).

\begin{figure}
\centering
\includegraphics[width=7cm]{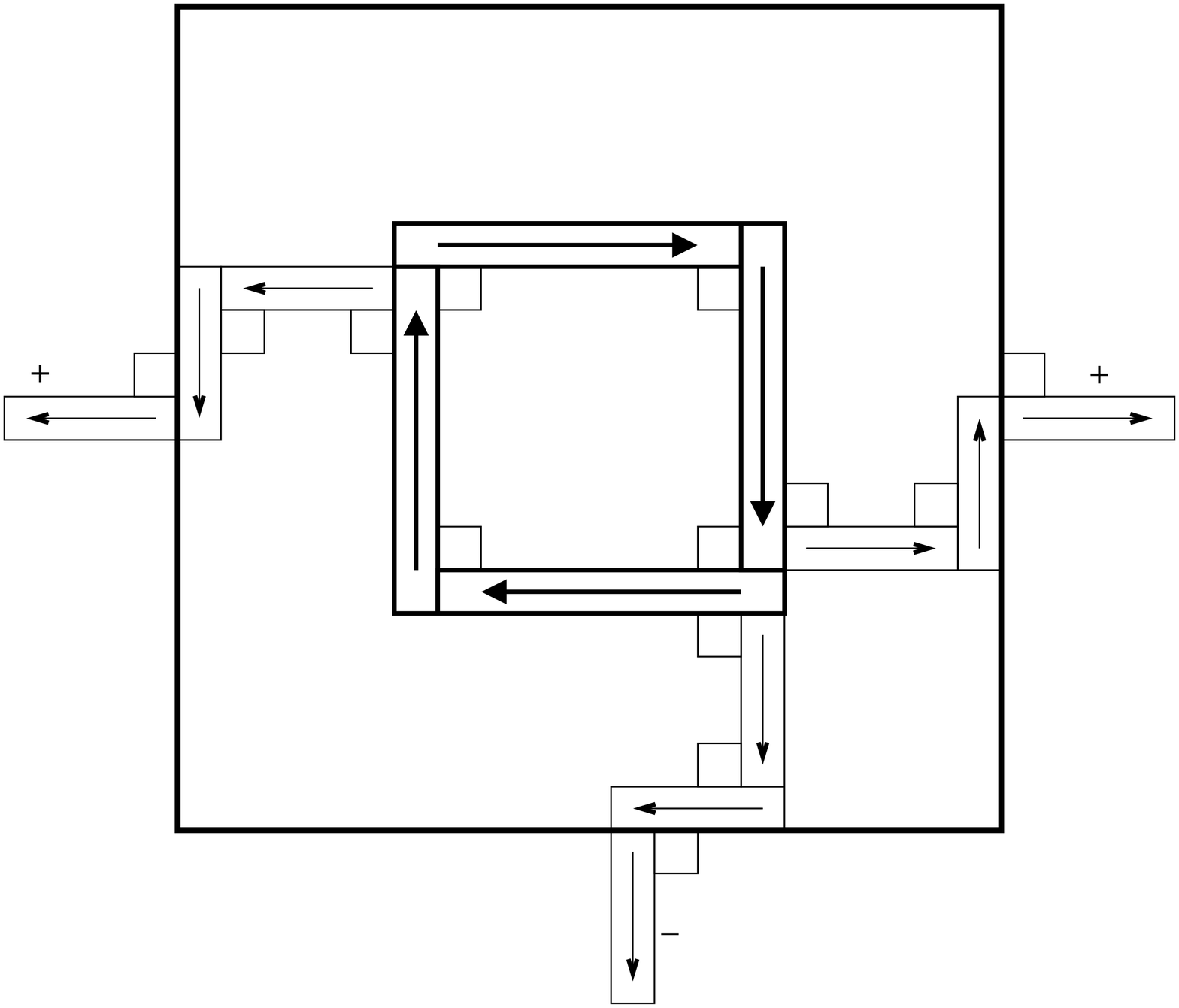}
\hspace{1cm}
\includegraphics[width=7cm]{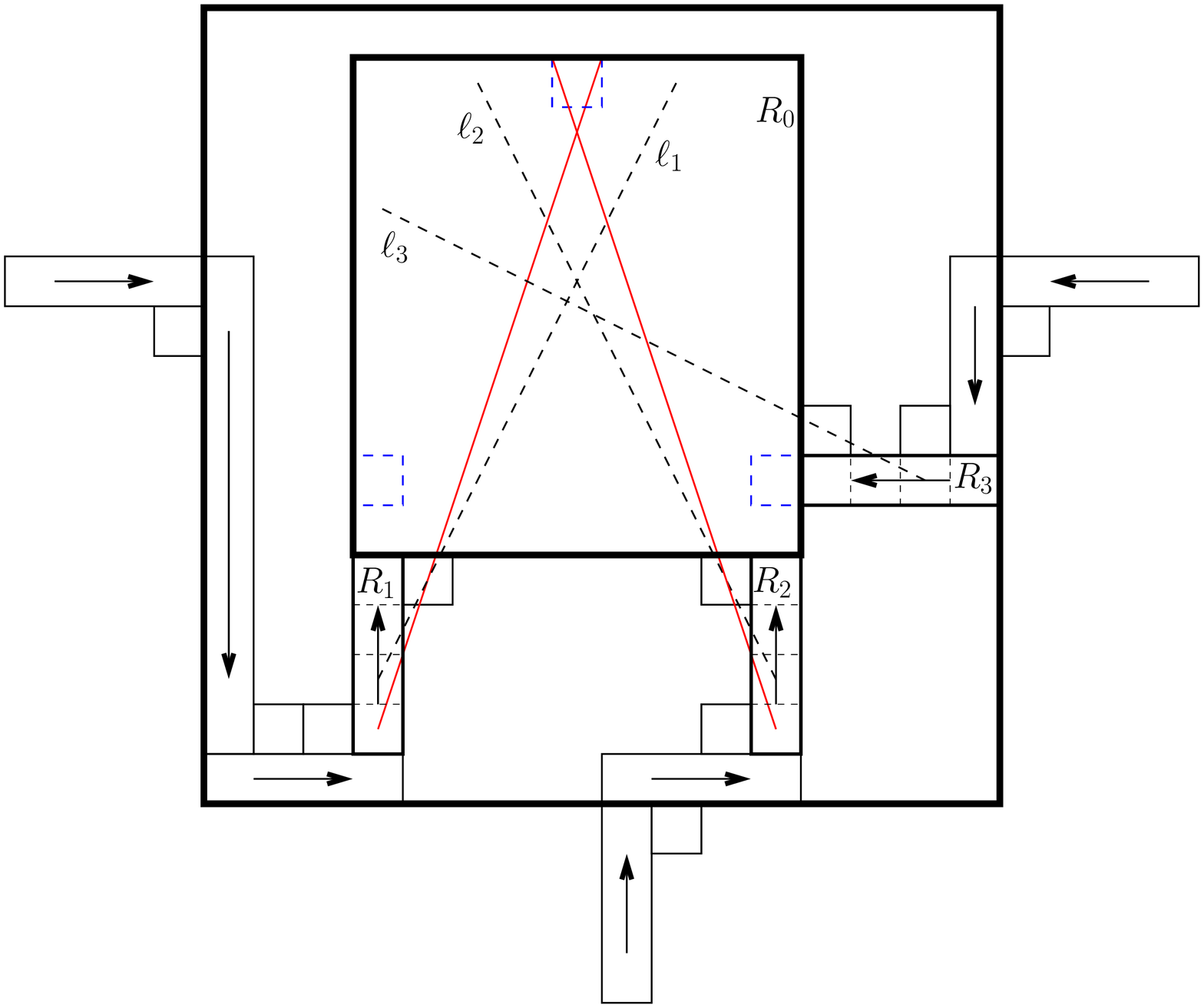}
\caption{Left: The variable gadget for $X_1$ from Figure~\ref{fig:grid3sat}.
Right: The clause gadget for $\bar{X}_2\vee X_3\vee\bar{X}_4$ from Figure~\ref{fig:grid3sat}.
Note that $R_1$, $R_2$, $R_3$ are precisely of length 4; the dashed lines and the red lines
are drawn for illustration purposes in Lemma~\ref{lem:clause_lemma}.
}
\label{fig:gadgets}
\end{figure}

Inside each box representing a variable $X_i$, we place a \textbf{variable gadget} 
(Figure~\ref{fig:gadgets} (left)).
In its center, it has an L-cycle with 4 thin rectangles of length $8$,
which we direct in clockwise direction.
We call this cycle the \emph{variable cycle for $X_i$}.
Each L-path to a clause box that starts at a boundary side of the gadget is extended
into the interior of the box and forms an L-joint with 
the corresponding side of the variable cycle.
If the path is associated with a plus sign (that is,
the variable appears in non-negated form in the corresponding clause), the path
is connected to the front pixel. If it is associated with a minus sign, it is connected
to the back pixel. 

Inside each box representing a clause, we place a \textbf{clause gadget}
(Figure~\ref{fig:gadgets} (right)).
The three L-paths arriving at the boundary of the gadget are extended to connect
to the \emph{clause rectangle} (rectangle $R_0$ in Figure~\ref{fig:gadgets} (right)), 
with the last rectangle being exactly $4$ base squares long.
If the L-paths arrive at different sides than displayed in 
Figure~\ref{fig:gadgets}, we can just rotate the gadget by a multiple of $90^\circ$.
The gadget is designed to satisfy the following two properties:

\begin{lemma}[Clause Lemma]\label{lem:clause_lemma}
Let $v_0,v_1,v_2,v_3$ be the vertices of $R_0,R_1,R_2,R_3$,
respectively, 
as in Figure~\ref{fig:gadgets} (right), and let $\pi$ be
a partial projection that maps $v_i$ to a point $p_i$ for $1\leq i\leq 3$.
\begin{enumerate}
\item If $p_1$, $p_2$, and $p_3$ are in the back half
of $R_1$, $R_2$, and $R_3$, respectively, then $\pi$ cannot be completed to an embedding.
\item If at least one of the $p_i$ is in the front pixel of $R_i$, for $1\leq i\leq 3$, 
then there exists an extension of $\pi$ to $v_0$
such that $\pi$ preserves the orientation of all triangles incident to $v_0$.
\end{enumerate}
\end{lemma}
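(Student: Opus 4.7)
The plan is to translate each orientation requirement at a $2$-simplex incident to $v_0$ into a half-plane constraint on the position of $\pi(v_0)$, and then to analyze the feasibility of the resulting system of linear constraints. By Theorem~\ref{thm:embedding}, extending $\pi$ to an embedding is equivalent to preserving the orientation of every $2$-simplex, so it suffices to reason about these triangles and their seed configurations.

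First I would enumerate the $2$-simplices of $D(\partition)$ that contain $v_0$. Each arises from a point of the gadget at which $R_0$ meets two other boxes simultaneously, so inspection of Figure~\ref{fig:gadgets} (right) produces a short explicit list: for each $i\in\{1,2,3\}$ there are triangles formed by $R_0$, $R_i$, and a third box adjacent to the meeting point of $R_0$ and $R_i$. For each such triangle, the seed configuration is determined entirely by the fixed gadget geometry (independent of $\pi$), so the required orientation of $(\pi(v_0),\pi(v_i),\pi(v_j))$ is prescribed. This orientation requirement is exactly the assertion that $\pi(v_0)$ lies on a particular side of the oriented line through $\pi(v_i)$ and $\pi(v_j)$; the dashed and red lines in Figure~\ref{fig:gadgets} (right) are precisely these constraint lines for the two extremal positions of the $p_i$.

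For statement 1, I would show that when each $p_i$ lies in the back half of $R_i$, the half-planes arising from the triangles incident to $R_0$ jointly forbid every half-integral point of $R_0$. Concretely, a back-half placement of $p_i$ shifts the corresponding constraint line so that the forbidden half-plane inside $R_0$ covers a strip along the side of $R_0$ facing $R_i$. A short coordinate computation on the half-integral grid shows that the three forbidden strips (one per arm) together cover all of $R_0$, so no placement of $v_0$ can preserve all three orientations. Applying Theorem~\ref{thm:embedding} then rules out any embedding extending $\pi$. For statement 2, I would observe conversely that moving $p_i$ from the back half into the front pixel of $R_i$ pulls the associated constraint line outward, freeing up a substantial region in the interior of $R_0$. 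Given an index $i^*$ for which $p_{i^*}$ is in the front pixel, I would then exhibit a specific half-integral point of $R_0$ near its boundary with $R_{i^*}$ and check that the remaining two (possibly back-half) constraints are still satisfied there, because each of those only excludes a strip along the opposite side. By the threefold rotational symmetry of the gadget it suffices to carry out this verification in one representative case.

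The main obstacle is the bookkeeping in statement 1: one must read off from the figure exactly which triangles are incident to $v_0$, determine the seed orientation at each relevant intersection point, and verify that the three resulting forbidden strips collectively cover $R_0$ for every back-half placement of the $p_i$. The saving grace is that all constraints are linear and all relevant coordinates lie on the barycentric refinement of the integer grid, so once the geometry is spelled out each case reduces to a direct inequality check; the design of the gadget (in particular that $R_1,R_2,R_3$ have length exactly $4$) is precisely what forces the strips to be wide enough to cover $R_0$ jointly and narrow enough to leave room when any one $p_i$ moves to the front pixel.
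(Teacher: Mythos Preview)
Your plan is essentially the paper's own argument: translate each orientation constraint at a triangle incident to $v_0$ into a half-plane condition on $p_0$, show in part~1 that the three half-planes (the lines $\ell_1,\ell_2,\ell_3$ in the figure) have empty common intersection inside $R_0$, and in part~2 exhibit an explicit half-integral point of $R_0$ that works when some $p_i$ sits in its front pixel. One small correction: the clause gadget does \emph{not} have threefold rotational symmetry (the three arms attach to $R_0$ in genuinely different ways, and the paper accordingly treats the case $i^*=3$ separately from $i^*\in\{1,2\}$), so you will need to handle at least two distinct cases rather than appealing to symmetry.
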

\begin{proof}
Consider Figure~\ref{fig:gadgets} (right);
for the first part, let $p_0$ be the projection of $v_0$ in a completion of $\pi$. 
If $\pi$ is an embedding, $p_0$ is left of $\ell_1$,
right of $\ell_2$ and below $\ell_3$, which is impossible.
For the second part, if $p_1$ or $p_2$ are in the front pixel,
we can set $p_0$ to be the center of the rightmost or leftmost dashed base square in 
Figure~\ref{fig:gadgets} (right), respectively. If $p_3$ is in the front pixel, we can
choose the center of the topmost blue base square for $p_0$. From the red line
originating in $R_1$ in Figure~\ref{fig:gadgets} (right), it can be seen that the triangle
induced by $R_0$, $R_1$, and the pixel adjacent to both has the correct orientation.
The same is true for the triangle spanned by $R_0$, $R_2$ and the pixel
adjacent to both.
\end{proof}

To summarize, we have constructed a partial integral rectangular partition, consisting of
variable cycles, clause rectangles, and L-paths connecting them. We finally 
complete the partition by filling the empty spots with pixels. Let $\partition$ denote
this partition. We can construct $\partition$ in polynomial time in $n$. The following lemma
is sufficient to prove NP-hardness of \texttt{Faithful\_Embed}.

\begin{lemma}\label{lem:sat_emb_equi}
There exists an embedding of $D(\partition)$ if and only if $F$ has a satisfying assignment.
\end{lemma}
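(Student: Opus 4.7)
The plan is to establish both directions by exploiting the two-state structure of variable cycles and propagating orientations through L-paths via Lemma~\ref{lem:L-joint}.

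For the ($\Leftarrow$) direction (satisfiable implies embeddable), given a satisfying assignment I build a projection $\pi$ explicitly. For each variable, place the four vertices of its variable cycle uniformly in the front halves (if \True{}) or uniformly in the back halves (if \False{}) of the thin cycle rectangles. Along every L-path from a variable to a clause, propagate this choice: the initial L-joint between the variable cycle rectangle and the first path rectangle, together with the sign of the path, determines whether the path starts in back mode or front mode, and subsequent L-joints transport this mode down the path. By the sign convention (plus attaches at the front pixel, minus at the back pixel), the final path rectangle's vertex arrives in the front pixel of the clause gadget's incoming rectangle exactly when the corresponding literal is \True{}. Since the assignment is satisfying, every clause has at least one such incoming rectangle with vertex in a front pixel, and part~(2) of Lemma~\ref{lem:clause_lemma} supplies a valid placement for the clause rectangle's vertex $v_0$. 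Filler pixels admit only their centers as half-integral interior points, which fixes their vertex positions. A final check --- that every $2$-simplex preserves orientation (inside cycles, L-paths, and clause gadgets via Lemmas~\ref{lem:L-joint} and~\ref{lem:clause_lemma}, elsewhere by direct geometry) --- combined with Theorem~\ref{thm:embedding} certifies that $\pi$ is an embedding.

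For the ($\Rightarrow$) direction (embeddable implies satisfiable), suppose $\pi$ is an embedding of $D(\partition)$. By Theorem~\ref{thm:embedding}, $\pi$ preserves the orientation of every $2$-simplex. Iterating Lemma~\ref{lem:L-joint} around each directed variable cycle shows that the four vertices of its thin rectangles must lie either uniformly in their back halves or uniformly in their front halves; these are the only globally consistent configurations (the contrapositive of the L-joint lemma gives the ``front'' direction of the implication). Define the truth assignment according to which state occurs at each variable. Propagating along each L-path by Lemma~\ref{lem:L-joint}, the front/back state of the last rectangle at the clause side is determined by the variable's state together with the sign of the path, via the same correspondence used in the construction above. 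Part~(1) of Lemma~\ref{lem:clause_lemma} excludes the possibility that all three rectangles $R_1, R_2, R_3$ of a clause gadget have their vertices simultaneously in their back halves. Hence every clause has at least one \True{} literal under the assignment, so $F$ is satisfied.

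The main obstacle is the sign-propagation bookkeeping: I must verify geometrically that orientation-preserving L-joints force the two thin rectangles into the same front/back state, and that the plus/minus attachment at the variable cycle correctly couples the variable's cycle state to the incoming rectangle's front/back state at the clause. This entails a finite case analysis over attachment orientations (accounting for path turns and for rotations of the gadgets by multiples of $90^\circ$); once done, the uniform correspondence ``literal is \True{} $\Leftrightarrow$ the final rectangle's vertex sits in its front pixel at the clause side'' underlies both halves of the proof.
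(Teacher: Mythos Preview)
Your proposal follows the same approach as the paper: read off a truth assignment from the uniform front/back state of each variable cycle, propagate along L-paths via Lemma~\ref{lem:L-joint}, and invoke the Clause Lemma at each clause gadget; for the converse, build the projection from a satisfying assignment by placing vertices at front or back pixels and certify it with Theorem~\ref{thm:embedding}.

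One statement needs correcting. In the $(\Rightarrow)$ direction you claim that ``the front/back state of the last rectangle at the clause side is determined by the variable's state together with the sign of the path.'' Lemma~\ref{lem:L-joint} does not give this: with the L-path directed from variable to clause, the lemma only asserts that \emph{back} propagates forward (equivalently, by contraposition, \emph{front} propagates backward). A variable cycle in the front-half state does \emph{not} force the clause-side rectangle into its front half. The paper's argument runs in the direction the lemma actually supports: by part~(1) of the Clause Lemma some incoming $R_i$ has its vertex in the front half; then the contrapositive of the L-joint Lemma, applied iteratively \emph{backward} along the path toward the variable, forces every path rectangle and finally the variable cycle into their front halves, which (together with the plus/minus attachment convention) pins down the literal's truth value. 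Your concluding sentence (``hence every clause has at least one \True{} literal'') implicitly relies on exactly this backward chain and is correct, but the explicit forward-determination claim is false and should be replaced by the backward argument.
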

\begin{proof}
``$\Rightarrow$'': Let $\pi$ be an embedding. We define an assignment as follows: For
every variable $X_i$, consider its variable cycle (which is clockwisely arranged).
Note that the L-joint Lemma implies that $\pi$ either projects all vertices
of the variable cycle to the front half, or it projects all vertices to the back half
of the corresponding rectangles. 
We set $X_i$ to \True\ if $\pi$ projects to the front half
and to \False\ otherwise. We show that this assignment satisfies $F$:

Let $C=L_i\vee L_j\vee L_k$ be a clause of $F$ where $L_i=X_i$ or $L_i=\bar{X}_i$,
and consider the corresponding clause gadget. Let $R_i$, $R_j$, $R_k$ the last rectangles
of the L-paths connecting the variable gadgets of $X_i$, $X_j$, and $X_k$, respectively,
to the clause rectangle. Since $\pi$ is an embedding, at least one of the vertices of $R_i$, $R_j$,
or $R_k$ is placed in the corresponding front half by the Clause Lemma.
Assume w.l.o.g.\ that the vertex of $R_i$ is placed in the front half. By iteratively applying
the L-joint Lemma,
all vertices in the L-path from $X_i$ to $C$ must therefore be placed in the front half.
If $L_i=X_i$, the path was associated with a plus sign, and therefore, the L-path is connected
with the variable cycle of $X_i$ at a front pixel 
(as for instance the left and right paths originating from
the variable cycle in Figure~\ref{fig:gadgets} (left)). Since $\pi$ preserves orientations, 
it follows again from the L-joint Lemma that the
vertices in the variable cycle must also have been placed in the front half. 
Hence, our assignment
sets $X_i$ to $\True$ and therefore, the clause is satisfied. If $L_i=\bar{X}_i$, 
the argument is analogous.

``$\Leftarrow$'': We construct a projection based on a satisfying 
assignment. We start with the variable gadgets: If $X_i=1$, we place the vertices
in its variable cycle at the front pixel, otherwise at the back pixel.
For every L-path that leaves a variable cycle, we distinguish two cases: If
the path has a plus sign and $X_i=1$, or if the path has a minus sign and $X_i=0$, we place
all vertices in the path in the front pixel; otherwise, we place all vertices in the back
pixel. It is straight-forward to verify that this projection preserves the orientation
for each $2$-simplex induced by L-joints.
Finally, let $C=L_i\vee L_j\vee L_k$ be a clause as above. 
Consider the corresponding clause
gadget and let $R_i$, $R_j$, $R_k$ denote the connecting
(thin) L-path rectangles. By our construction, the vertex of $R_i$ is in the front pixel
if and only if $L_i=1$, same for $j$ and $k$. 
Since we have a satisfying assignment, at least one literal is
satisfied, so one of the vertices is in the front pixel. By the Clause Lemma,
we can therefore place the vertex of the clause rectangle such that all triangles
preserve their orientation.

Our projection preserves all orientations of the constructed partial partition.
It is not difficult to see that the additional $2$-simplices caused by filling the empty spots
with base squares preserve their orientation as well (in fact, any projection preserves
their orientation). Therefore, the constructed projection is an embedding by the 
Embedding Theorem.
\end{proof}

In conclusion, we can summarize

\begin{theorem}
\texttt{Faithful\_Embed} is NP-complete.
\end{theorem}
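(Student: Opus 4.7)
The plan is to combine the ingredients already assembled in the section. Membership in NP has essentially been argued at the start of the section: given a partition $\partition$, the dual complex $D(\partition)$ has size polynomial in the input, and for any candidate half-integral faithful projection (which is a polynomial-size certificate) one can in polynomial time enumerate the $2$-simplices and check orientation-preservation against each seed configuration. By the Embedding Theorem, this suffices to verify whether the projection is an embedding. So the remaining work is NP-hardness.

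For hardness, I would reduce from \texttt{grid3sat}, whose NP-completeness was cited from~\cite{godau-difficulty}. Given an instance $I$ encoding a formula $F$, I invoke the construction already described: refine the host grid a constant number of times, replace each variable vertex by the variable gadget of Figure~\ref{fig:gadgets} (left), each clause vertex by the clause gadget of Figure~\ref{fig:gadgets} (right), replace each variable--clause path in $I$ by a disjoint directed L-path inheriting the sign, connect L-paths to the variable cycle at the front pixel (plus sign) or the back pixel (minus sign), and fill the remaining free region with unit pixels. This produces a rectangular partition $\partition$ of a square whose side length is linear in $N$, and the construction runs in time polynomial in $|I|$.

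The correctness of the reduction is exactly Lemma~\ref{lem:sat_emb_equi}: $D(\partition)$ admits an embedding if and only if $F$ is satisfiable. Combined with polynomial-time constructibility of $\partition$, this gives a polynomial-time many-one reduction from \texttt{grid3sat} to \texttt{Faithful\_Embed}, hence \texttt{Faithful\_Embed} is NP-hard; together with membership in NP it is NP-complete.

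Because the real work (designing the gadgets and proving the L-joint Lemma, the Clause Lemma, and Lemma~\ref{lem:sat_emb_equi}) has already been carried out, at this final step there is no substantive obstacle. The only thing that still deserves brief mention is that the reduction target is an instance of \texttt{Faithful\_Embed} as defined, i.e.\ the filled partition $\partition$ is a genuine rectangular partition of a square $[0,n]\times[0,n]$; this is immediate from the fact that the refined host grid is a square and the leftover cells are tiled by pixels. Hence the theorem follows directly.
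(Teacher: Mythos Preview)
Your proposal is correct and follows exactly the paper's approach: the theorem is stated as a summary of the preceding section, combining the already-argued NP membership with NP-hardness via the polynomial-time reduction from \texttt{grid3sat} whose correctness is Lemma~\ref{lem:sat_emb_equi}. The paper in fact gives no separate proof for this theorem beyond the words ``In conclusion, we can summarize'', so your write-up is, if anything, slightly more explicit than the original.
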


A slight variation of our proof
shows that the problem remains NP-hard when allowing projections that are not half-integral
(see Appendix~\ref{app:non-half-integral}).
However, it is unclear whether this variant of the problem also remains in NP.

\section{Center projections and $\beta$-balancing}
\label{sec:balancing}

We turn back to arbitrary dimensions.
We define the \emph{center} of a box $[a_1,b_1]\times\ldots\times [a_d,b_d]$ 
to be the point
$((a_1+b_1)/2,\ldots,(a_d+b_d)/2)$. The \emph{center projection} is the projection of $D(\partition)$
that maps each vertex to the center of the corresponding box. Clearly,
this projection is half-integral and faithful. We call $\partition$ \emph{center-embeddable} if the center projection
is an embedding. 

We define the \emph{balance} of
a set of boxes $\{R_0,\ldots,R_k\}$
to be the length of the longest side among all boxes $R_0,\ldots,R_k$ divided by the length of the shortest side
among all boxes $R_0,\ldots,R_k$.
We define the balance of a $k$-simplex $\sigma$ of $D(\partition)$ to be the balance of the set of dual boxes.
The \emph{aspect ratio} of a box $R$ is the balance of $\{R\}$.
Obviously $R$ has aspect ratio $1$ if and only if $R$ is a square box.
We call $D(\partition)$ \emph{$\beta$-balanced} for some $\beta\geq 1$, if each simplex
of $D(\partition)$ has balance at most $\beta$.\footnote{This is equivalent to require
that any edge in $D(\partition)$ has balance at most $\beta$, which is the definition
given in the introduction.}
Informally, 
$\beta$-balanced partitions have the property that 
boxes are not too skinny, but also
neighboring boxes do not differ
too much in side lengths. 

The simplest sufficient condition for a partition to be center-embeddable is that all boxes
are square boxes of same size. We investigate several generalizations of this trivial criterion.

\paragraph{Planar results}
It is not hard to prove that a partition in $\R^2$ consisting 
only of squares is always center-embeddable, 
because the edge connecting the centers of two squares does not leave the union of the squares.
However, if we only bound the aspect ratio of any rectangle by $1+\eps$ (with $\eps>0$),
this property does not hold in general, and we can construct a counterexample
for center-embeddability.
For $\beta$-balanced partitions, we prove a tight bound for center-embeddability. 

\begin{figure}
\centering
\includegraphics[width=6cm]{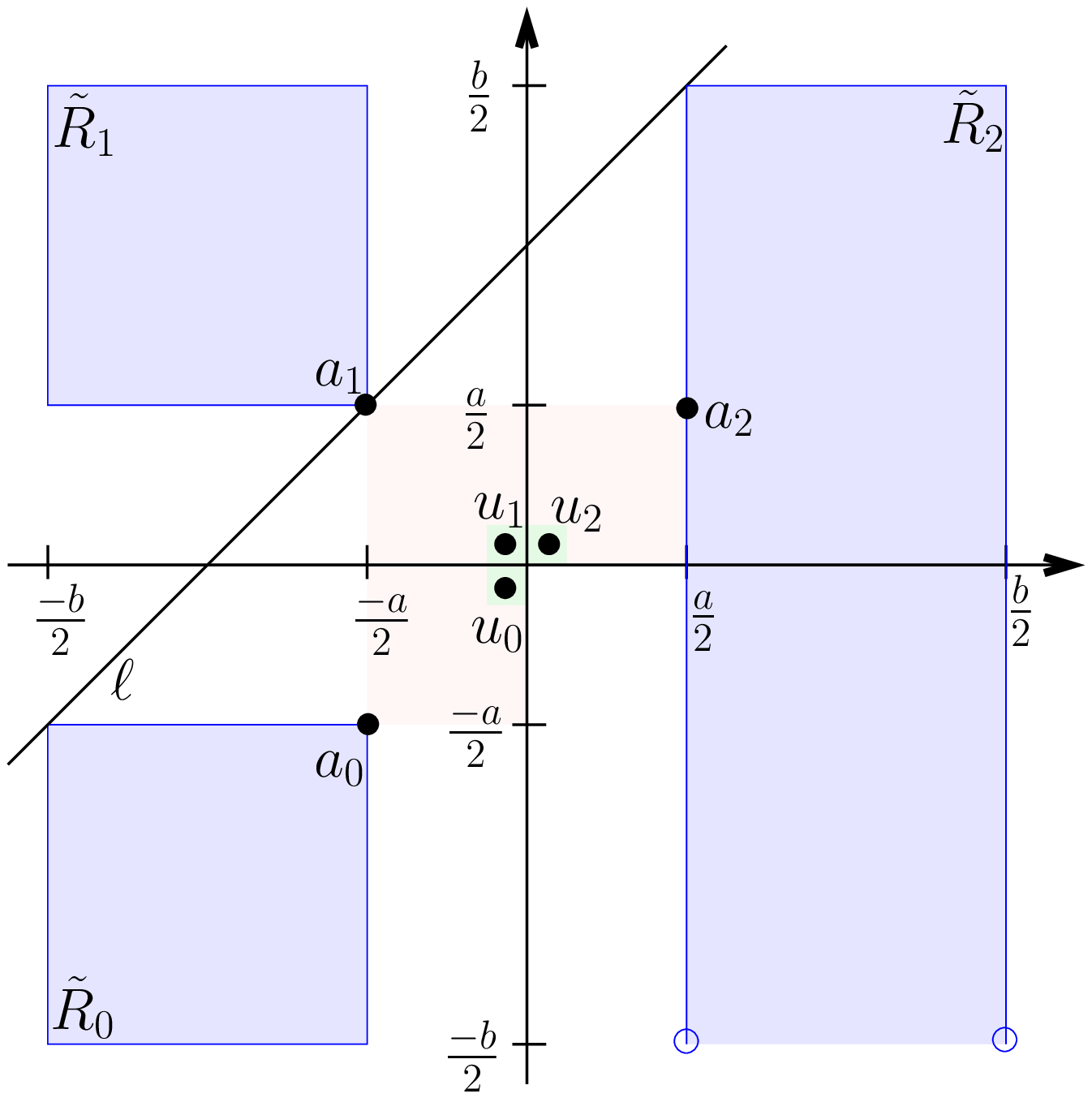}
\hspace{1cm}
\includegraphics[width=7.5cm]{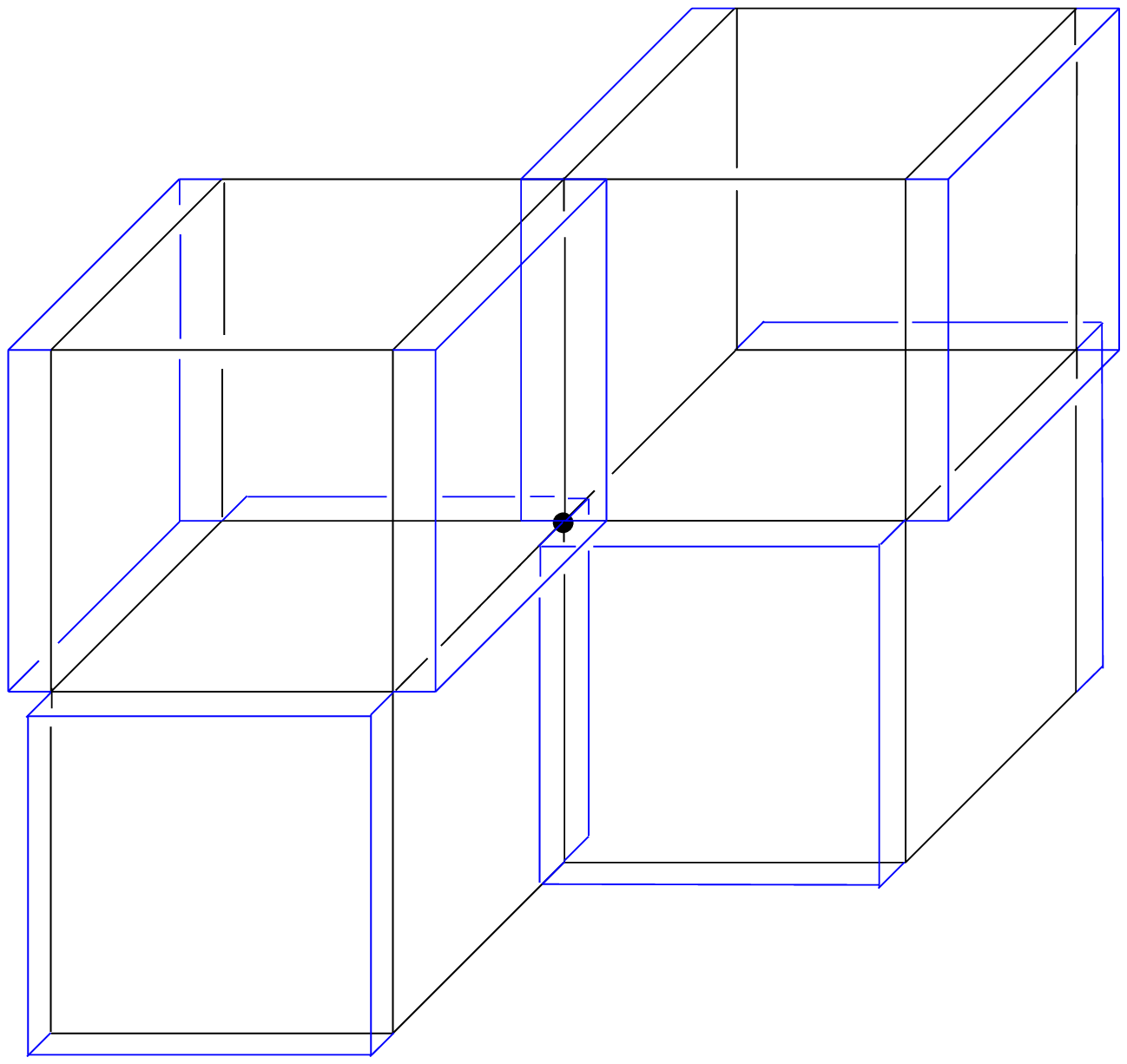}
\caption{Left: Illustrations of the proof of Theorem~\ref{thm:planar_balancing}.
The drawing assumes that $\beta=3$. In this case, we can find a line that touches
$C_0$, $C_1$, and $C_2$, leading to a counterexample for center-embeddability.
Right: Illustration of the construction in the proof of Theorem~\ref{thm:beta_balanced_high}.}
\label{fig:worst_case_3d}
\label{fig:beta_balanced_plane}
\end{figure}

\begin{theorem}[Planar Balancing Theorem]\label{thm:planar_balancing}
A $\beta$-balanced partition of $\R^2$ is center-embeddable for $\beta<3$, and
there exists a $3$-balanced partition that is not center-embeddable.
\end{theorem}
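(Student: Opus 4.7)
The plan is to apply the Embedding Theorem (Theorem~\ref{thm:embedding}) to reduce center-embeddability to checking that the center projection preserves the orientation of every $2$-simplex of $D(\partition)$. In the plane, every $2$-simplex is dual to a point where rectangles meet, filling $360^\circ$; since each angular sector contributed by a rectangle is $90^\circ$ (a corner) or $180^\circ$ (an edge interior), only two configurations produce $2$-simplices: \emph{T-junctions} (three rectangles, one contributing $180^\circ$ and two contributing $90^\circ$) and \emph{$4$-corners} (four rectangles at corners, which by the distortion $T_\eps$ split into two $2$-simplices sharing a diagonal in the $(1,1)$-direction). For each configuration I would compute the signed area of the triangle of centers as an explicit polynomial in the six side-lengths involved and compare its sign to the seed orientation.

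For a T-junction at the origin with $R_0$ on the left and $R_1,R_2$ on the upper/lower right, the signed area decomposes into definitely-positive terms bounded below by $4m^2$ (where $m$ is the smallest side in the triangle) and a single ``bad'' term $(a_2-a_1)(b_0^+-b_0^-)$, with $b_0^\pm$ the extents of $R_0$ above and below the junction. Using $b_0^\pm\geq m$ together with the $\beta$-balance constraint $b_0^++b_0^-\leq\beta m$, this bad term is at most $m^2(\beta-1)(\beta-2)$, so the signed area has the desired sign whenever $(\beta-1)(\beta-2)<4$, i.e.\ $\beta<(3+\sqrt{17})/2\approx 3.56$; in particular the T-junction case imposes no obstruction for $\beta<3$.

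The binding case is the $4$-corner. Parametrizing the four rectangles as $R_{--}=[-a_1,0]\times[-b_1,0]$, $R_{+-}=[0,a_2]\times[-b_2,0]$, $R_{++}=[0,a_3]\times[0,b_3]$, $R_{-+}=[-a_4,0]\times[0,b_4]$, a direct signed-area computation shows that the orientation of the triangle $\{R_{--},R_{+-},R_{++}\}$ is preserved iff
\[
Q := a_1(b_2+b_3)+a_2(b_1+b_3)+a_3(b_2-b_1) > 0.
\]
Since the three rectangles pairwise intersect at the origin, $\beta$-balance forces all six of their sides into $[m,\beta m]$. The crucial step is to observe that $Q$ is linear and monotone increasing in $a_1,a_2,b_2,b_3$, so at the minimum these all equal $m$; the remaining two partials $\partial Q/\partial a_3 = b_2-b_1$ and $\partial Q/\partial b_1 = a_2-a_3$ then pin $b_1=a_3=\beta m$ in the interesting regime. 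Substituting yields $Q_{\min}=m^2(3+2\beta-\beta^2)=-m^2(\beta-3)(\beta+1)$, strictly positive iff $\beta<3$; the companion triangle $\{R_{--},R_{++},R_{-+}\}$ yields the same threshold by symmetry.

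For sharpness, take the $4$-corner $R_{--}=[-1,0]\times[-3,0]$, $R_{+-}=[0,1]\times[-1,0]$, $R_{++}=[0,3]\times[0,1]$, $R_{-+}=[-1,0]\times[0,1]$, whose centers $(-\tfrac{1}{2},-\tfrac{3}{2})$, $(\tfrac{1}{2},-\tfrac{1}{2})$, $(\tfrac{3}{2},\tfrac{1}{2})$ are collinear on $y=x-1$. I complete this partial configuration to a full partition of (a translate of) $[0,4]^2$ by filling the remaining L-shape with $[0,3]\times[-3,-1]$ and $[1,3]\times[-1,0]$, checking that every intersecting pair still has side ratio at most $3$. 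The result is a $3$-balanced partition with a degenerate $2$-simplex under the center projection, hence not center-embeddable. The main obstacle is achieving the \emph{tight} threshold $\beta<3$: uniform term-by-term bounds give only the weaker $\beta<(1+\sqrt{17})/2\approx 2.56$, so the key observation is the monotonicity of $Q$ in each of its six variables, which pins down the extremal configuration exactly.
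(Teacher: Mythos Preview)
Your overall strategy---split into T-junctions and $4$-corners, write the signed area as an explicit multilinear form, and optimize over the side lengths---is sound and genuinely different from the paper's. The paper does not case-split: it assumes $R_0$ lies in the lower-left quadrant, $R_1$ in the upper-left, and $R_2$ anywhere in the right half-plane, describes for each $R_i$ the region $C_i$ of all centers compatible with the balance constraint, and observes that an orientation flip would force a line meeting all three $C_i$; a short geometric argument then shows such a line exists only when $b\geq 3a$. Your algebraic approach is more hands-on but yields exactly the same extremal configuration for the $4$-corner, and your monotonicity argument pinning $a_1=a_2=b_2=b_3=m$, $b_1=a_3=\beta m$ is clean. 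Your counterexample is essentially the paper's, reflected across the diagonal.

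There is, however, a genuine error in your T-junction analysis. You assert $b_0^{\pm}\geq m$, but $b_0^{+}$ and $b_0^{-}$ are the extents of $R_0$ above and below the junction, not side lengths of any rectangle in the triple; only their \emph{sum} $b_0^{+}+b_0^{-}$ is a side of $R_0$ and hence $\geq m$. Each of $b_0^{\pm}$ can be as small as $1$, so the bound $|b_0^{+}-b_0^{-}|\leq(\beta-2)m$ and your threshold $(3+\sqrt{17})/2$ are unjustified. The repair is to run your own monotonicity argument on the full form
\[
a_0(b_1+b_2)+a_1b_2+a_2b_1-(a_2-a_1)(b_0^{+}-b_0^{-}),
\]
with the correct constraints $b_0^{\pm}\geq 1$ and $b_0^{+}+b_0^{-}\leq\beta m$: the first four partials force $a_0=a_1=b_1=b_2=m$, then the partials in $b_0^{\pm}$ force $b_0^{-}=1$, $b_0^{+}=\beta m-1$, and finally $a_2=\beta m$. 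The minimum is $m^{2}(3+2\beta-\beta^{2})+2(\beta-1)m$, positive for $\beta<3$. So the T-junction threshold is not $3.56$ but (asymptotically) also $3$, with a strictly positive correction from integrality; the $4$-corner remains the binding case and your conclusion stands.
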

\begin{proof}
Consider a $\beta$-balanced partition. For the first part, it is enough to prove that the center projection
preserves the orientation of each $2$-simplex for $\beta<3$ according to the Embedding Theorem.
So, let $\sigma$ be a $2$-simplex, caused by the intersection of three rectangles $R_0$, $R_1$,
$R_2$. W.l.o.g., we assume that the intersection point is the origin, that $R_0$ is completely
contained in the lower-left quadrant, $R_1$ is contained in the upper-left quadrant,
and $R_2$ is contained in the halfplane $x\geq 0$; all other cases are obtained by 
suitable rotations and reflections. Let $a$ denote the length of the shortest side
among the $R_i$, and let $b$ denote the length of the longest side; we have $\frac{b}{a}\leq\beta$.
We let $C_i$ denote the set of possible positions of the center of $R_i$. 
By the imposed constraints, it is straight-forward to verify that $C_i$
are rectangles as illustrated in Figure~\ref{fig:beta_balanced_plane} (left).
Note that the lower side of $C_2$ does not belong to the set of possible centers,
while the other boundary sides do.

Recall that the orientation of $\sigma$ is preserved if
$\orient(p_0,p_1,p_2)=\orient(u_0,u_1,u_2)$, where the $u_i$ define the seed configuration
of $\sigma$; in our case, $u_0=(-1/2,-1/2)$, $u_1=(-1/2,1/2)$, and $u_2=(1/2,1/2)$.
We scale each $u_i$ by $a$, obtaining the points $a_0$, $a_1$, $a_2$; 
obviously, $\orient(u_0,u_1,u_2)=\orient(a_0,a_1,a_2)$.
Assume that the orientation of $\sigma$ is not preserved.
Then, if we continuously move the points $a_i$ to the points $p_i$ along the
connecting line segment (which lies in $C_i$), there must be a time
where the three points become collinear. 
It follows that there must be a line in $\R^2$ 
which intersects each $C_i$. 
Such a line, however, exists if and only if $b\geq 3a$,
as can be seen by elementary geometric arguments.
It follows that the orientation of $\sigma$ is preserved if $\beta<3$.

The construction also yields the counterexample for $\beta=3$: Consider the rectangles
intersecting in the origin with centers
$p_0=(-3/2,-1/2)$, $p_1=(-1/2,1/2)$, $p_2=(1/2,3/2)$. We can fill the bounding box
by pixels to obtain a $3$-balanced partition. The $2$-simplex dual to the origin
is projected to a line segment because $p_0$, $p_1$, $p_2$ are collinear. Therefore,
the partition is not center-embeddable.
\end{proof}

The counterexample constructed in the proof of Theorem~\ref{thm:planar_balancing}
yields a non-generic partition. If we assume genericity, the set $C_2$ defined in the proof
loses the upper boundary side, and a line intersecting each $C_i$ can only be found if $b>3a$.
It follows that generic $3$-balanced partitions are center-embeddable. 
However, generic $\beta$-balanced partitions that are not center-embeddable 
can be constructed for any $\beta>3$.

\paragraph{Higher dimensions} Restricting to square boxes (that is, aspect ratio $1$) 
does not guarantee
center-embeddability for any $d>2$: In $\R^3$, the triangle spanned by the centers
of three intersecting cubes might leave the union of the three cubes, and a counterexample
can be constructed; see~\cite{bek-computing} and~\cite{ek-freudenthal}.
We show that also restricting to $\beta$-balanced partitions does not guarantee embeddability,
unless $\beta=1$:

\begin{theorem}\label{thm:beta_balanced_high}
For any $\beta>1$, there exists a $\beta$-balanced partition in $\R^d$ that is not center-embeddable.
\end{theorem}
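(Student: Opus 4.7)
The plan is to reduce to the case $d=3$: the result for $d>3$ follows by taking the Cartesian product of an $\R^3$ example with a unit-cube tiling of $[0,N]^{d-3}$ in the extra coordinates. Since the $\R^3$ example will be built from near-unit boxes, the product remains $\beta$-balanced, and any $3$-simplex of $D$ whose orientation is reversed in the $\R^3$ factor lifts to a $d$-simplex in the product with the same defect.

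In $\R^3$, the strategy is to exhibit a small local arrangement of near-cubic boxes meeting at a common corner such that one of the $3$-simplices it contributes to $D(\partition)$ has its center projection in the opposite orientation to its Freudenthal seed configuration. I would then complete the partition by tiling the remainder of the bounding cube with unit cubes; since unit-cube surroundings only produce standard grid simplices whose centers trivially preserve the seed orientation, at least one $3$-simplex preserves orientation and at least one reverses it. The Embedding Theorem then rules out the center projection being an embedding, and the partition is not center-embeddable.

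I expect the main obstacle to be the explicit design of the local failing arrangement for $\beta$ arbitrarily close to $1$. The Freudenthal seed tetrahedron at a shared corner has fixed positive volume, and a naive perturbation of a cubical configuration shifts the four centers by only $O(\beta-1)$, changing the orientation determinant by $O(\beta-1)$---insufficient, at first sight, to flip its sign when $\beta-1$ is small. The workaround is to make a coordinated choice of elongation axes across the four boxes so that all first-order contributions to the orientation determinant add up with the same sign and collectively push the fourth center across the plane of the other three. Equivalently, one uses the directional bias of the distortion to place the four boxes into octants whose pixel-centers almost lie in a single plane, so that the aspect-ratio budget $\beta-1$ suffices to produce the crossing.

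Once such a local configuration is written down, the remaining verifications are routine: the $\beta$-balance holds everywhere because the central boxes are pairwise within factor $\beta$ and are adjacent only to unit cubes that match them up to factor $\beta$; the bounding cube can be tiled exactly; and the Embedding Theorem applies because at least one standard-grid $3$-simplex from the unit-cube surroundings preserves orientation while the constructed local $3$-simplex does not.
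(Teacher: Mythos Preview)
Your plan has a structural gap that stems from the integrality of boxes. You propose to build the $\R^3$ example from ``near-unit'' boxes, complete with unit cubes, and lift to $\R^d$ via a product with a unit-cube tiling. But box side lengths are integers; for $\beta<2$ the only integer in $[1,\beta]$ is $1$, so ``near-unit'' forces genuine unit cubes, and four unit cubes cannot cover the eight voxels around a corner (hence do not span a $3$-simplex of $D(\partition)$). The local configuration therefore has to use boxes of \emph{large} side length $b$ with $(b{+}2)/b\le\beta$. Once that is accepted, both of your framing steps break: completing with unit cubes places a size-$1$ box next to a size-$b$ box (balance $\approx b$), and the product with a unit-cube tiling in the extra coordinates produces boxes with some sides $\approx b$ and others equal to $1$ (aspect ratio $\approx b$). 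Neither stays $\beta$-balanced.

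Your diagnosis of the ``main obstacle'' is also off-target. You frame it as accumulating $O(\beta-1)$ perturbations to flip the sign of the orientation determinant. The paper's construction avoids this entirely: it selects four of the eight octants around the origin whose centers $(\pm b/2,\pm b/2,\pm b/2)$ are \emph{exactly} coplanar (e.g.\ those lying on $x=z$); the center-projected tetrahedron is therefore degenerate for every $b$, so orientation preservation already fails with no budget spent. The $(\beta-1)$-budget is used only to add a single voxel layer to two opposite faces of each cube, turning it into a $b\times b\times(b{+}2)$ box so that the four boxes together cover all eight voxels at the origin and hence constitute a $3$-simplex of $D(\partition)$; this symmetric extension leaves the centers unchanged. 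Completing the partition then cannot be done with unit cubes: the paper partitions a bounding cube $[-2b,2b]^3$ into eight corner-anchored pieces and splits each into eight sub-boxes with side lengths in $[b-2,b+2]$, keeping every adjacency within ratio $(b{+}2)/(b{-}2)\le\beta$. For $d>3$ the paper repeats the same idea directly in $\R^d$ rather than via a product.
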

\begin{proof}
We restrict to $d=3$ for the simplicity of presentation
although our construction generalizes to any dimension:
For simplicity, we talk about \emph{cubes} and \emph{voxels}
instead of square boxes and unit boxes throughout the proof.
Choose $b\in\Z$ such that $\beta\geq \frac{b+2}{b}>1$. 
Consider four cubes of length $b$ having the origin as a corner,
such that the centers lie in a common plane 
(the black boxes in Figure~\ref{fig:worst_case_3d} (right)).
Then, extend the cubes by a layer of voxels at two opposite faces, such that 
all voxels around the origin are covered by the four boxes
(the blue boxes in Figure~\ref{fig:worst_case_3d} (right)). 
This ensures that the $3$-simplex induced by the four boxes belongs to the dual complex
of the partition.
The extension does not change the centers, consequently,
the center projection cannot preserve the orientation of that simplex.
Moreover, all boxes have side length $b$ and $b+2$. It remains to show that
the arrangement of the four boxes can be completed to a partition without
destroying $\beta$-balancing. We give details of this step
in Appendix~\ref{app:puzzle}.
\end{proof}

\paragraph{The cubical case} We finally consider $\beta$-balanced partitions
in $\R^d$ where all boxes are square boxes. 
We call such partitions \emph{cubical}.
We first construct
counterexamples in arbitrary dimension to show:

\begin{theorem}\label{thm:counter_beta_balanced_dd}
For $d\geq 3$ and $\beta>\frac{d}{d-2}$, 
there exists a $\beta$-balanced cubical partition in $\R^d$ that is not
center-embeddable.
\end{theorem}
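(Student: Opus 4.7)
The strategy parallels Theorem~\ref{thm:beta_balanced_high}, now respecting the cubical constraint. Since $\beta>d/(d-2)$ and the rationals are dense in $(d/(d-2),\beta]$, I pick positive integers $s_0<s$ with $d/(d-2)<s/s_0\leq\beta$. Locally at the origin $p$, I place a small cube $C_0=[-s_0,0]^d$ of side $s_0$ together with, for $i=1,\ldots,d$, a large cube $C_i$ of side $s$ occupying the orthant whose $i$-th coordinate is positive and whose other coordinates are negative, each with $p$ as the corresponding corner. Subtracting the center of $C_0$ from the remaining centers reduces the orientation determinant to the $d\times d$ determinant of $sI+\tfrac{s_0-s}{2}\mathbf{1}\mathbf{1}^\top$; the matrix-determinant lemma then yields
\[
\det \;=\; s^{d-1}\,\frac{d\,s_0-(d-2)\,s}{2},
\]
which is strictly positive for the seed configuration (all sides equal to $1$) and strictly negative under the choice $s/s_0>d/(d-2)$. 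Thus, provided the $d$-simplex $\{C_0,C_1,\ldots,C_d\}$ is in $D(\partition)$, the center projection reverses its orientation.

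The remaining work has two parts. First, I must guarantee that this $d$-simplex actually appears in $D(\partition)$: since the occupied orthants do not form a Freudenthal chain, I fill the remaining orthants around $p$ (those with at least two positive coordinates) with carefully sized cubes, so that the distortion scheme selects one voxel from each $C_i$ whose distorted Voronoi cells share a common point. This local verification is analogous to the extension step in Theorem~\ref{thm:beta_balanced_high} and is naturally handled in an appendix via a direct computation with the distortion map $T_\eps$. Second, I extend the local configuration to a full partition of a bounding cube by filling empty regions with cubes of sides in $\{s_0,s\}$ (or intermediate values where necessary to maintain adjacency), as in the puzzle construction of Appendix~\ref{app:puzzle}; $\beta$-balance holds because the largest side-length ratio between any pair of touching cubes remains $s/s_0\leq\beta$.

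Applying Theorem~\ref{thm:embedding} then finishes the proof: at least one $d$-simplex has reversed orientation (the one at $p$) while some other simplex (for instance, a $d$-simplex lying inside a uniform region) preserves its orientation, so the center projection is not an embedding, and $\partition$ is not center-embeddable. The hard part will be arranging the auxiliary cubes around $p$ so that $\{C_0,C_1,\ldots,C_d\}$ survives the distortion scheme, which normally favors Freudenthal-aligned simplices; this is where the cubical constraint makes the argument noticeably more delicate than in Theorem~\ref{thm:beta_balanced_high}.
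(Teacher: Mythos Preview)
Your determinant computation is correct and matches the paper's, and the high-level strategy is the same. But there is a genuine gap in the local construction, precisely at the point you flag as ``the hard part.''

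With your placement, the only unit boxes adjacent to the origin contained in $C_0,C_1,\ldots,C_d$ are $v_0=(-\tfrac12,\ldots,-\tfrac12)$ and $v_i=v_0+e_i$ for $1\le i\le d$. In the distorted Voronoi diagram that defines $D(\partition)$, the $d$-simplices at the origin are exactly the Freudenthal simplices, i.e.\ monotone chains $v_0,\,v_0+e_{\sigma(1)},\,v_0+e_{\sigma(1)}+e_{\sigma(2)},\ldots$ for some permutation $\sigma$. Your set $\{v_0,v_0+e_1,\ldots,v_0+e_d\}$ is not of this form; already $v_0+e_1$ and $v_0+e_2$ are not joined by an edge, so their distorted Voronoi cells do not even share a facet, and the $d+1$ cells have empty common intersection. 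Consequently $\{C_0,\ldots,C_d\}$ is \emph{not} a $d$-simplex of $D(\partition)$, no matter how you fill the remaining orthants: the distortion scheme depends only on the voxel grid, not on which box owns which voxel, so adding auxiliary cubes cannot create the missing incidence.

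The paper's fix is different from what you propose. Rather than filling the other orthants with new cubes, it shifts the large cubes $R_2,\ldots,R_d$ by one unit in the earlier coordinates (this is the parameter $\delta=1$ in the matrix $M_d^{(\delta)}$). With this shift, $R_i$ spans $[-b+1,1]$ in coordinates $1,\ldots,i-1$ and therefore covers $2^{i-1}$ voxels around the origin; together with $R_0,R_1$ this accounts for all $2^d$ voxels. The intersection at the origin is then generic with exactly $d+1$ boxes, and their voxels form a Freudenthal chain, so the simplex is automatically in $D(\partition)$. The shift perturbs the centers, which is why the paper computes $\det M_d^{(0)}$ first and then argues that for $a,b$ large enough $\det M_d^{(1)}$ has the same (negative) sign. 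Your construction misses this shifting idea; the ``analogous extension step'' in Theorem~\ref{thm:beta_balanced_high} also works by making the $d+1$ boxes themselves cover all $2^d$ voxels (via the added voxel layers), not by introducing extra boxes.
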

\begin{proof}
For $d\geq 3$, $a,b\in\Z$ positive, and $\delta\in\{0,1\}$ consider the $(d+1)\times (d+1)$-matrix
\[M^{(\delta)}_d:=\left(\begin{array}{cccccc}
1 & -\frac{a}{2} &  & \ldots &  & -\frac{a}{2} \\
1 & \frac{b}{2} & -\frac{b}{2} & \ldots & \ldots & -\frac{b}{2}\\
1 & -\frac{b}{2}+\delta & \frac{b}{2} & -\frac{b}{2} & \ldots & -\frac{b}{2}\\
\vdots & \vdots & \ddots & \ddots & \ddots & \vdots\\
1 & -\frac{b}{2}+\delta & \ldots & -\frac{b}{2}+\delta & \frac{b}{2} & -\frac{b}{2}\\
1 & -\frac{b}{2}+\delta & \ldots & \ldots & -\frac{b}{2}+\delta & \frac{b}{2}
\end{array}\right).\]
It can be shown by elementary row operations (Appendix~\ref{app:determinant}) that
\[\det M^{(0)}_d=\frac{1}{2}b^{d-1}(da-(d-2)b).\]
In particular, note that the determinant is positive 
if $\frac{b}{a}<\frac{d}{d-2}$, vanishes if $\frac{b}{a}=\frac{d}{d-2}$,
and is negative if $\frac{b}{a}>\frac{d}{d-2}$.

Let $a$ and $b$ be such that 
$\beta>\frac{b}{a}> \frac{d}{d-2}$, so that
$\det M^{(0)}_d<0$. If we choose $a$ and $b$
large enough, it follows that $\det M^{(1)}_d<0$
as well, just because $M^{(1)}_d$ constitutes
a small perturbation of $M^{(0)}_d$ for large $a$ and $b$. 
The rows of $M^{(1)}_d$ 
define $d+1$ points $p_0,\ldots,p_d$ in $\R^d$ (ignoring the first column).
We define the square box $R_i$ with center $p_i$ and side length $a$
for $R_0$, and length $b$ for $R_1,\ldots,R_d$. By this choice, it can be seen
easily that all $2^d$ unit boxes around the origin are covered by
the $R_i$, so $R_0,\ldots,R_d$ form a $d$-simplex $\sigma$ in the dual complex.
Moreover, $u_0:=(-1/2,\ldots,-1/2)\in R_0$, $u_1:=(1/2,-1/2,\ldots,-1/2)\in R_1$,
$u_2:=(1/2,1/2,-1/2,\ldots,-1/2)\in R_2$, and so on, thus $u_0,\ldots,u_d$
form a seed configuration and it is easily seen to that $\orient(u_0,\ldots,u_d)>0$.
Because $\orient(p_0,\ldots,p_d)<0$ by construction, the center
projection does not preserve the orientation of $\sigma$.

The last step of the proof is to show that we can complete the initial
configuration $R_0,\ldots,R_d$ to a complete $\beta$-balanced
cubical partition in $\R^d$. The details of this step are 
skipped for brevity. See Appendix~\ref{app:puzzle} for details.
\end{proof}

We are able to show that the constructed counterexample is the worst case
in $\R^3$. Precisely, we state:

\begin{theorem}\label{thm:cubical_partition_bound_3d}
Cubical $\beta$-balanced partitions in $\R^3$ 
are center-embeddable for $\beta<3$.
\end{theorem}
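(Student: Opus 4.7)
The plan is to invoke the Embedding Theorem (Theorem~\ref{thm:embedding}) and show that, for a cubical $\beta$-balanced partition with $\beta<3$, the center projection preserves the orientation of every $3$-simplex of $D(\partition)$. Fix such a $3$-simplex $\sigma$, dual to four cubes $R_0,\ldots,R_3$ meeting at a common point, which we translate to the origin. Let $s_i$ be the side length of $R_i$, set $a=\min_i s_i$ and $b=\max_i s_i$ so that $b/a\leq\beta<3$, and let $u_0,\ldots,u_3\in\{\pm 1/2\}^3$ (respectively $p_0,\ldots,p_3$) denote the seed configuration (respectively the cube centers).

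First I would pin down the combinatorial shape of the seed. The distortion with direction $(1,\ldots,1)$ tetrahedralises the eight voxels around the origin into six Freudenthal-type ``staircase'' tetrahedra---one per permutation of the coordinate axes---and the four seed voxels of $\sigma$ must form one such staircase. After applying a cubical symmetry we may assume
\[
(u_0,u_1,u_2,u_3)=\bigl((-,-,-),(+,-,-),(+,+,-),(+,+,+)\bigr)/2,
\]
for which $\det[1\,|\,u_i]=1>0$ by a direct computation. The task is now to prove $F:=\det[1\,|\,p_i]>0$.

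Writing $p_i=u_i+\delta_i$, the containment $U_i\subseteq R_i$ forces $\|\delta_i\|_\infty\leq(s_i-1)/2$, and the requirement that the origin lie on $\partial R_i$ pins at least one coordinate of $\delta_i$ to an extreme value. After subtracting the $0$-th row and introducing the telescoping variables $y_i:=\delta_i-\delta_{i-1}$, the determinant collapses to
\[
F=\det\begin{pmatrix} 1+y_{1,1} & y_{2,1} & y_{3,1} \\ y_{1,2} & 1+y_{2,2} & y_{3,2} \\ y_{1,3} & y_{2,3} & 1+y_{3,3} \end{pmatrix},
\]
which is affine in each coordinate $y_{i,j}$. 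Hence the minimum of $F$ over the admissible polytope of $y$'s is attained at a vertex, and after exploiting the residual cubical symmetries of the staircase the vertex enumeration reduces to a short list. The extremal vertex corresponds, up to symmetry, to the configuration used in the proof of Theorem~\ref{thm:counter_beta_balanced_dd} for $d=3$, where $F$ evaluates to $\tfrac12 b^{\,2}(3a-b)$ plus lower-order corrections from the unit offsets (analogous to the $\delta=1$ term treated in Appendix~\ref{app:determinant}); this quantity is strictly positive whenever $b/a<3$.

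The main obstacle is the vertex enumeration together with the verification that the Theorem~\ref{thm:counter_beta_balanced_dd} configuration is extremal. The admissible $y$-polytope is cut out by coupled constraints (each $\delta_i$ lying in its own box, and each $R_i$ having the origin on its boundary), and the signs of the monomials in $F$ must be tracked carefully under the staircase symmetries to rule out any other vertex yielding a smaller value of $F$. This final step is what makes the $\beta<3$ threshold tight.
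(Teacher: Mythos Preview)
Your approach is genuinely different from the paper's and, as written, has a real gap.

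\textbf{What the paper does.} The paper does not attempt to minimize the determinant $F$ directly. Instead, it argues that if the center projection failed to preserve orientation, then by continuity there would be a plane hitting all four ``center regions'' $C_0,\ldots,C_3$ (line segment, trapezoid, or hexahedron, depending on how many voxels each cube covers). It then reduces to exactly two combinatorial configurations of how the four cubes split the eight voxels around the origin---the \emph{regular} configuration $\{0\},\{1\},\{2,3\},\{4,5,6,7\}$ and the \emph{singular} configuration $\{0,1\},\{2,3\},\{4,6\},\{5,7\}$---and for each one verifies, by a quantifier-elimination computation, that no such plane exists when $\beta<3$. In other words, the paper's proof is computer-assisted, and the paper itself flags the absence of an elementary argument as an open problem.

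\textbf{Where your proposal is incomplete.} You correctly note that $F$ is affine in each coordinate and that the minimum is therefore attained at a vertex of the feasible polytope, but you do not actually carry out the vertex enumeration---you only assert that ``the vertex enumeration reduces to a short list'' and that the extremal vertex is the one from Theorem~\ref{thm:counter_beta_balanced_dd}. That assertion is precisely the hard part. Concretely: (i) the feasible region is not a product of coordinate intervals, since the side lengths $s_i$ are themselves free in $[a,b]$ and the constraint ``origin on $\partial R_i$'' is disjunctive (it pins \emph{some} coordinate of $p_i$ to $\pm s_i/2$), so the region is non-convex and the vertex argument needs care; (ii) you never distinguish how many voxels each $R_i$ covers, which is what produces the two qualitatively different cases the paper must check separately---in particular the singular configuration is not visibly the same extremal problem as the regular one; (iii) the phrase ``plus lower-order corrections from the unit offsets'' is not a proof that those corrections cannot flip the sign at some other vertex. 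Until the enumeration is actually done and each candidate vertex bounded, the argument is a plan, not a proof.
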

\begin{proof}
For simplicity, we restrict to generic partitions in the proof; the 
general case works with the same methodology, but requires closer
investigation of the distortion defined for the dual complex construction.
As in Theorem~\ref{thm:beta_balanced_high}, 
we use the terms cube and voxel instead of square box and hyper-pixel.

Consider $4$ cubes $R_0,\ldots,R_3$ intersecting in a point $p$.
Let $a$ be the length of the shortest and $b$ be the length of the longest
side among the $R_0,\ldots,R_3$. 
Assume w.l.o.g.\ that $p=(0,0,0)$. By genericity, the union of the $R_i$ covers all
$8$ voxels adjacent to $p$, and each cube covers at least one voxel.
Represent these voxels by the 3-bit numbers $0,\ldots,7$ where the 
$1$st/$2$nd/$3$rd bit of voxel $j$ is set to $1$ if and only if the 
$x$-/$y$-/$z$-coordinate of the voxel is positive. 
Assign to each $R_i$ a subset of
$\{0,\ldots,7\}$, denoting the voxels that it occupies.
Only three cases are possible, namely that $R_i$
covers a single voxel, that $R_i$ covers two voxels which are face adjacent,
or four voxels which lie in a common halfspace.
We consider $C_i$, the set of possible centers for $R_i$.
If $R_i$ covers only one voxel
adjacent to the origin, say voxel $0$, then $C_i$
is the line segment connecting $(-a/2,-a/2,-a/2)$
and $(-b/2,-b/2,-b/2)$. If $R_i$ covers two voxels which are face adjacent,
say voxels $0$ and $1$, $C_i$ is the trapezoid
spanned by the four points $(\pm a/2,-a/2,-a/2)$, $(\pm b/2,-b/2,-b/2)$; 
more precisely, the center can not
lie on either of the two non-parallel lines because this would prevent
the cube to span over both voxels, so $C_i$ is the trapezoid with those
two sides excluded.
Finally, if $R_i$ covers four voxels in a common hyperplane, say
voxels $4,\ldots,7$, $C_i$ is the polytope spanned by
the $8$ points $(\pm a/2, \pm a/2,a/2)$, $(\pm b/2, \pm b/2, b/2)$;
the polytope has 6 faces, two
of them being parallel.
More precisely, the center can not lie on a boundary face except for the
two parallel sides, so $C_i$ is that polytope with 
the other $4$ faces excluded.

Assume for a contradiction that the center projection does not preserve
the orientation of the simplex spanned by $R_0,\ldots,R_3$.
Scaling the seed configuration of the simplex by a factor of $a$, 
we obtain a simplex spanned by $a_0,\ldots,a_3$ with $a_i$ in the closure
of $C_i$ with the same orientation as the seed configuration. Because
this orientation is not preserved, there are points $p_0,\ldots,p_3$
with $p_i\in C_i$ such that the orientation is different. It follows
that there are points $p'_0,\ldots,p'_3$ with
$p'_i\in C_i$ whose orientation is zero. In other words, there exists
a plane which intersects $C_0,\ldots,C_3$.
However, $C_0,\ldots,C_3$ are completely determined
by the way of how the boxes $R_i$ decompose the voxels $\{0,\ldots,7\}$
adjacent to the origin, and there are only two configurations possible,
up to rotations and reflections:
the \emph{regular configuration}
\[\{0\}, \{1\}, \{2,3\}, \{4,5,6,7\},\]
and the \emph{singular configuration}
\[\{0,1\}, \{2,3\} ,\{4,6\}, \{5,7\}.\]
Consequently, there must be a plane intersecting the two line segments, the
trapezoid and the polytope defined by the regular configuration,
or a plane intersecting the four trapezoids defined 
by the singular configuration. It can be verified, however, that such a plane
does not exist by formulating the statement in terms of a quantified
system of inequalities and using a quantifier elimination program~%
\cite{ch-partial}\footnote{We used
qepcad: \url{http://www.usna.edu/cs/~qepcad/B/QEPCAD.html}} (see Appendix~\ref{app:qe}). 
Hence, the center-projection
preserves the orientation of every simplex and is an embedding.
\end{proof}

The proof idea can in principle be extended to higher dimensions. However,
there are two problems: First, the number of configurations to check
increases; for instance, there are already 3 configurations to check in $\R^4$.
Second, and more seriously, the complexity of the quantifier elimination
increases dramatically for higher dimensions; in fact, we were not able
to verify that $\beta$-balanced cubical partitions in $\R^4$ 
are center-embeddable for $\beta<\frac{d}{d-2}=2$.

\section{Conclusion}
\label{sec:conclusion}

We have proven positive and negative results about partitions and their duals 
in this paper; it is NP-hard to
decide whether an embedding exists already in the plane; if we project
vertices to centers of rectangles, there are some simple balancing
conditions that ensure embeddability in the plane, namely, if
all rectangles are squares or if the partition is $(3-\eps)$-balanced
with $\eps>0$. In higher dimensions, however, we need to combine
both properties, that means, consider cubical $(3-\eps)$-balanced partitions
to get a guarantee that partitions can be embedded.
We are posing the question whether another simple condition could
guarantee center-embeddability as well.

Quad-tree decompositions and their higher-dimensional analogues have
the nice property that they can be balanced with a greedy strategy,
without granulating the partition too much. 
It seems unclear, however, how a planar partition can be
turned into a $3$-balanced partition efficiently by subdividing
rectangles.

This work has considered
hyper-rectangular partitions in general~-- can better embeddability
results be derived for certain restrictions,
in particular for partitions that arise from a sequence of 
hyperplane cuts (sometimes also called guillotine constructions)?
We expect the answer to be negative in the plane; however, the
situation might be different in higher dimensions.

On a technical side, a natural improvement would be to replace
the computer-assisted proof in Theorem~\ref{thm:cubical_partition_bound_3d}
with a geometric argument; this would hopefully extend into higher dimensions
and prove our conjecture that $\beta$-balanced partitions in $\R^d$
are center-embeddable for $\beta<\frac{d}{d-2}$.
A second question is about a variant of \texttt{Faithful\_embed}: 
while it is clear
that deciding the existence of an embedding remains hard in three
dimensions, it is not clear whether the restriction to cubical
partitions simplifies the problem.

\section*{Acknowledgments}
The author thanks Herbert Edelsbrunner for fruitful discussions,
David Eppstein for pointing out related work,
and the anonymous referees for their extremely valuable feedback.

\begin{appendix}

\section{The Embedding Theorem}
\label{app:embedding}

\begin{embtheorem}[Embedding Theorem]
Let $\partition$ be a partition.
Let $\pi$ be a projection of $D(\partition)$ that preserves
the orientation of at least one $d$-simplex.
Then, $\pi$ is an embedding if and only if 
it preserves the orientation of each $d$-simplex.
\end{embtheorem}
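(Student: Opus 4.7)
My plan is to prove the two directions of the biconditional separately, relying on a reference embedding built from the distorted partition and a topological degree argument to bridge them.

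\emph{The ``embedding implies orientation preservation'' direction.} Suppose $\pi$ is an embedding, and consider any two $d$-simplices $\sigma,\sigma'$ of $D(\partition)$ sharing a $(d{-}1)$-face $\tau$. Since $\pi$ is simplicial and injective, $\pi(\sigma)$ and $\pi(\sigma')$ lie on opposite sides of the affine hyperplane through $\pi(\tau)$. A direct check on any adjacent pair of distorted unit-box configurations shows that the corresponding seed configurations also lie on opposite sides of their shared face. Consequently $\sigma$ preserves orientation under $\pi$ if and only if $\sigma'$ does. Because the distorted partition tiles the connected ball $B$, the $d$-simplex adjacency graph of $D(\partition)$ is connected, and the single preserved orientation assumed in the hypothesis forces preservation for every $d$-simplex.

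\emph{The ``orientation preservation implies embedding'' direction.} I would introduce the reference projection $\pi_0$ that maps each vertex $v_i$ to the center of the distorted box $\tilde R_i$. By the nerve construction applied to the convex distorted cells, $\pi_0$ is an embedding, and $|D(\partition)|$ is thereby identified with a PL triangulation of $B$; in particular $\pi_0$ preserves the orientation of every $d$-simplex. Now treat $\pi$ as a PL map $f\colon|D(\partition)|\to\R^d$. For any regular value $y\in\R^d$, the preimage $f^{-1}(y)$ is a finite set of points lying in interiors of $d$-simplices, and under the hypothesis each contributes $+1$ to the local degree $\deg(f,y)$; hence the unsigned preimage count equals $\deg(f,y)$. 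If one can show that this degree matches the degree of $\pi_0$, which is $1$ in the interior of $B$ and $0$ outside, then the preimage count is at most $1$, yielding injectivity on interiors of $d$-simplices. Combining with the opposite-sides reasoning from the easy direction extends injectivity across codimension-$1$ faces, and a standard induction down the skeleton delivers global injectivity.

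\emph{Main obstacle.} The delicate step is the equality $\deg(f,y)=\deg(\pi_0,y)$. The naive straight-line homotopy $f_t:=(1-t)\pi_0+t\pi$ remains faithful because each box is convex, but the orientation of an individual $d$-simplex, being the sign of a multilinear determinant in the vertex positions, can cross zero at intermediate $t$; so one cannot justify degree invariance by orientation preservation of $f_t$ alone. The remedy I would adopt is to derive degree invariance from boundary behavior: establish inductively, using a lower-dimensional analogue of the theorem applied to the facial sub-partitions, that the restriction of $\pi$ to the boundary subcomplex of $|D(\partition)|$ (the one dual to boxes touching $\partial B$) is already an embedding onto a PL $(d{-}1)$-sphere bounding a region that contains all of $\pi(|D|)$. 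Once that sphere is controlled, the degree of $f$ at interior points is forced by the winding number of the boundary, and a standard PL invariance-of-domain argument closes the proof. I expect this inductive handling of the boundary, together with the interplay between the distortion and the faces of $B$, to be the main technical burden of the argument.
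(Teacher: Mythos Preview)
Your easy direction (embedding $\Rightarrow$ orientation preservation) is essentially the paper's argument: adjacent $d$-simplices under an injective simplicial map land on opposite sides of their shared facet, the seed configurations satisfy the analogous sign relation, and connectivity of the top-simplex adjacency graph propagates preservation from the one assumed simplex to all.

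For the hard direction you correctly reach for degree theory, but the mechanism you propose for pinning down $\deg(f,y)$---an induction on dimension via ``facial sub-partitions'' of $\partial B$---has a genuine obstruction. The boundary of $D(\partition)$ is \emph{not} the dual complex of any $(d{-}1)$-dimensional hyper-rectangular partition: a box $R_i$ touching $\partial B$ has its dual vertex placed by $\pi$ in the $d$-dimensional interior of $R_i$, not on the face, and the simplices on $\partial|D(\partition)|$ do not correspond to intersections within a face of $B$. So there is no well-posed instance of the theorem one dimension down to feed an inductive hypothesis, and no evident control on where $\pi$ sends the boundary subcomplex or why its image should enclose $\pi(|D|)$. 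The ``main technical burden'' you flag is not merely technical; as stated the induction does not get off the ground.

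The paper sidesteps the boundary entirely with a padding trick: attach two layers of unit boxes around $\partial B$, extend $\pi$ by sending each new vertex to its pixel center, and compactify both domain and target to $\S^d$ by adding a point at infinity. In the strip between the two added layers every incident box is a unit pixel mapped to its own center, so the projected complex there is the standard grid triangulation and the degree is visibly~$1$. Since the degree of a map $\S^d\to\S^d$ is a single global number, it equals~$1$ everywhere; with all orientations positive this forces each regular value to lie in exactly one projected $d$-simplex, and injectivity follows. The padding manufactures a region where the degree is trivially computable, replacing your boundary induction, and the compactification turns degree into one number rather than a function of $y$ that must be matched against a reference map.
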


We give some more details about its proof which is based on~\cite{ek-freudenthal}.
A set is called \emph{contractible} if it is can be reduced to a single point
by a continuous deformation~\cite{hatcher}.

\begin{lemma}\label{lem:fractual_distortion}
Any intersection of $k$ distorted boxes of $\partition$ is either empty or contractible.
\end{lemma}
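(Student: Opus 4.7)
The plan is to prove contractibility in two stages: a topological reduction via the nerve theorem, followed by a combinatorial argument that exploits the specific form of the distortion $T_\eps$.

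First, I would observe that the intersection $I := \tilde R_{i_1}\cap\ldots\cap\tilde R_{i_k}$ decomposes as a union of Voronoi faces of the form $\tilde U_1\cap\ldots\cap\tilde U_k$, where each $U_j$ ranges over the pixels of the box $R_{i_j}$. Each Voronoi cell $\tilde U_j$ is a convex polytope (being a Voronoi cell of a point set in $\R^d$), so every such face is an intersection of convex sets, hence convex and contractible. Applying the nerve theorem to this convex cover of $I$ reduces the problem to showing that the nerve of the decomposition is contractible as an abstract simplicial complex.

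Next, I would describe this nerve combinatorially. The distortion $T_\eps$ perturbs the regular integer-grid Voronoi diagram in a way that is consistent along the diagonal direction $(1,\ldots,1)$: it resolves each non-generic corner (where $d+2$ or more pixels meet) into a configuration of generic simplicial vertices according to the Freudenthal rule. Consequently, in any small neighborhood of the set-theoretic intersection $R_{i_1}\cap\ldots\cap R_{i_k}$, the face structure of the distorted Voronoi diagram is isomorphic to a local patch of the Freudenthal triangulation of the integer lattice. The nerve computed above then embeds as the star of a cell in this patch; such stars are combinatorially PL-balls and in particular contractible.

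The hard part will be making the identification with the Freudenthal-type structure rigorous. Concretely, I would need to characterize which pixel tuples $(U_1,\ldots,U_k)$ yield a non-empty face $\tilde U_1\cap\ldots\cap\tilde U_k$ --- equivalently, for which tuples the distorted centers $T_\eps u_1,\ldots,T_\eps u_k$ admit a common Voronoi witness strictly closer to them than to any other distorted pixel center. Since $T_\eps$ is linear in the coordinates, the witness condition reduces to a family of linear inequalities; I expect these to single out exactly the tuples of pixels incident to $R_{i_1}\cap\ldots\cap R_{i_k}$ that are consistent with the diagonal ordering imposed by $T_\eps$, matching the Freudenthal combinatorics. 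The remaining step, verifying contractibility of the resulting star, then follows from standard PL-topology facts about the Freudenthal triangulation, and the proof is completed by noting that the result is independent of $\eps\in(0,1)$.
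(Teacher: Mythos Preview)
The paper does not actually prove this lemma: it simply observes that the argument of the ``Fractual Distortion Lemma'' in~\cite{ek-freudenthal} carries over verbatim, because that proof uses neither cubicality nor the hierarchical structure of the partition. So any comparison is really against the cited reference rather than against material present here.

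Your overall plan --- cover $I$ by the convex pieces $\tilde U_1\cap\cdots\cap\tilde U_k$, then apply the nerve theorem to reduce to a combinatorial contractibility statement about Freudenthal-type incidences --- is sound and is close in spirit to how such lemmas are handled. The weak point is the assertion that the resulting nerve ``embeds as the star of a cell'' in the Freudenthal patch. Already for two boxes $R_0,R_1$ in $\R^2$ sharing a vertical edge of length~$2$, the non-empty cover elements $\tilde U_{0,i}\cap\tilde U_{1,j}$ and their mutual incidences form a three-vertex path, which is not the star of any Freudenthal cell; for longer shared faces one gets longer paths, and for $k>2$ or $d>2$ the nerve becomes a genuinely higher-dimensional complex whose shape depends on the geometry of $R_{i_1}\cap\cdots\cap R_{i_k}$. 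A subcomplex that merely \emph{embeds into} a star need not be contractible either, so that reading does not save the step. What actually makes these nerves contractible is a monotone/staircase structure coming from the diagonal ordering imposed by $T_\eps$: the admissible pixel tuples form an order-convex set that can be collapsed lexicographically (equivalently, one exhibits an explicit deformation retraction along the $(1,\ldots,1)$ direction). That is essentially what the Freudenthal combinatorics encode, but it has to be argued as a collapse or retraction, not as identification with a star. Until that step is supplied, the proposal has a genuine gap at precisely the place you yourself flag as the ``hard part''.
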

\begin{proof}
The proof given in the ``Fractual Distortion Lemma'' in~\cite{ek-freudenthal}
extends to arbitrary partitions (it is neither exploited that the boxes are cubes, nor
that they are arranged hierarchically).
\end{proof}

We now give a proof of the Embedding Theorem which works similar to
the proof of the ``Geometric Realization Theorem'' from~\cite{ek-freudenthal}.

Assume first that $\pi$ preserves the orientation of all $d$-simplices of $D(\partition)$.
Let $B\subset\R^d$ be the box that is partitioned by $\partition$. 
We attach two layers of unit boxes at the boundary of $B$, obtaining an extended box $\bar{B}$.
Note that this yields a partition $\bar{\partition}$, which is an extension of $\partition$.
We extend the projection $\pi$ of $\partition$ to $\bar{\pi}$ of $\bar{\partition}$ 
by mapping vertices in the new layers to the corresponding box centers.

We compactify $\R^d$ to the $d$-dimensional sphere $\S^d$ by adding a vertex at infinity.
Similarly, we compactify the dual complex $D(\bar{\partition})$ by adding a new vertex at infinity
and connecting it to every simplex of the boundary of $D(\bar{\partition})$.
Because of the Lemma~\ref{lem:fractual_distortion}, 
we can apply the Nerve Theorem~\cite{eh-computational} to $D(\bar{\partition})$, which states
that $D(\bar{\partition})$ triangulates a $d$-dimensional ball, and therefore, its compactification
triangulates $\S^d$.
Hence, the extended projection $\bar{\pi}$ defines a continuous mapping $g:\S^d\rightarrow\S^d$.
The \emph{degree of $g$} at a point $x$ not in any $(d-1)$-simplex is the number of
$d$-simplices that contain $g^{-1}(x)$, counting a $d$-simplex positive or negative
depending on the orientation of its image under $g$. This degree is $1$ in between the two
outside layers, because all boxes are of unit size and therefore, the orientation
of all $d$-simplices can be chosen positive. However, the degree of a map is a global
property that does not depend on the specific location of $x$. Hence, it is $1$ for any $x$.
 
We can assume that the orientation of all seed configurations is positive; because
all orientations are preserved by $\pi$,
the orientation of all $d$-simplices is positive. Because the degree of $g$ is $1$, it
follows that every point is in the interior of exactly one projected $d$-simplex. This implies injectivity, 
so $\pi$ is an embedding.

\begin{figure}[htb]
\centering
\includegraphics[width=7.5cm]{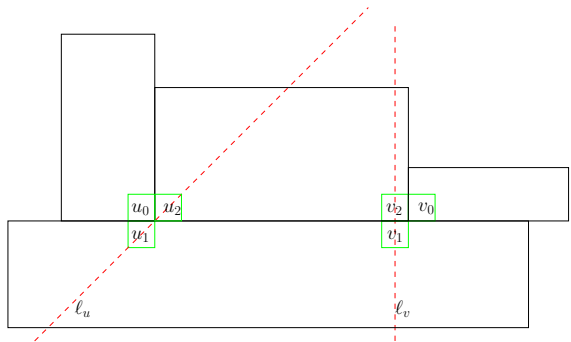}
\caption{Illustration for the fact that the orientations of $(u_0,u_1,u_2)$ and $(v_0,v_1,v_2)$
         are not the same; indeed, if we map $u_i$ to $v_i$ for $0\leq i\leq 2$, the line $\ell_u$
         is mapped to the line $\ell_v$, but the halfspace left of $\ell_u$ is mapped to
         the halfspace right of $\ell_v$. The same kind of argument can be applied
         in higher dimensions, replacing the dashed lines by hyperplanes and defining a notion
         of ``left'' and ``right'' by a common direction, for instance the vector from
         the center of $v_0$ to the center of $u_0$.}
\label{fig:orient_fig}
\end{figure}

For the opposite direction, assume that $\pi$ does not preserve the orientation 
of some $d$-simplex of $D(\partition)$. Because there exists at least one $d$-simplex
whose orientation is preserved, there exists some $(d-1)$-simplex $\tau$ in $D(\partition)$
incident to one simplex $\sigma_1$ whose orientation is preserved and to one 
simplex $\sigma_2$ whose orientation is not preserved.
It suffices to prove that $\sigma_1$ and $\sigma_2$ are mapped to the same side
of the supporting $(d-1)$-hyperplane of $\tau$.

To see that, let $u_0,\ldots,u_n$ and $v_0,\ldots,v_n$ denote the seed configurations of $\sigma_1$
and $\sigma_2$, respectively. W.l.o.g., we assume that the seed configurations are ordered such that
$u_1,\ldots,u_n$ and $v_1,\ldots,v_n$ are the pixels that belong to boxes in $\tau$, and
furthermore, for all $1\leq i\leq n$, $u_i$ and $v_i$ belong to the same box of the partition.
By these choices, we have that
$$O(u_0,\ldots,u_n)= -O(v_0,\ldots,v_n).$$
See Figure~\ref{fig:orient_fig} for an illustration in $\R^2$; 
the general case can be derived with an argument similar to~\cite[p.167]{ah-topologie}.
Now, let $w_1,\ldots,w_n$ denote the projections of the boxes of $\tau$, and let $u$ and $v$
denote the projections of the boxes corresponding to $u_0$ and $v_0$, respectively.
W.l.o.g., let $\sigma_1$ be spanned by $u$ and $w_1,\ldots,w_n$ and $\sigma_2$ be spanned
by $v$ and $w_1,\ldots,w_n$. Because the orientation of $\sigma_1$ is preserved, and the orientation
of $\sigma_2$ is not preserved, we have that
$$O(u,w_1,\ldots,w_2)=O(v,w_1,\ldots,w_2).$$
It follows that $u$ and $v$ are mapped to side of the hyperplane spanned by $w_1,\ldots,w_n$.

\section{Non-integral projections}\label{app:non-half-integral}
In this section, we reconsider the decision problem \texttt{Faithful\_Embed}
without the restriction that our projections are half-integral.
We show that the problem remains NP-hard in this relaxed setup.

The proof strategy follows the same steps as in Section~\ref{sec:NP}.
We require, however, slightly more skinny rectangles: We call a rectangle
\emph{thin} if one of its sides has length one, and the other has length at least $8$
(as opposed to length $4$ in Section~\ref{sec:NP}). The definitions of L-path, L-joint,
front pixel, front half etc.~directly carry over. Also, the L-joint lemma remains
true for arbitrary projections; see Figure~\ref{fig:L-joint-mod}.

\begin{figure}
\begin{center}
\includegraphics[width=6cm]{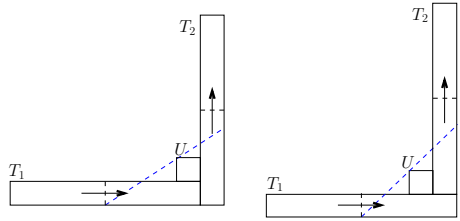}
\end{center}
\caption{The two possible configurations of L-joints (up to rotations and reflections).
If the vertex of $T_1$ is set in the back half, the pixel of $T_2$ must also be set
in the back half, regardless of where the vertex of $U$ is placed.}
\label{fig:L-joint-mod}
\end{figure}

The NP-hardness proof uses the same gadgets as in the half-integral case,
with minor modifications: The variable gadget now consists of rectangles of length $8$,
and all paths to clauses are also formed by rectangles of at least this length.
The clause gadget is modified as illustrated in Figure~\ref{fig:clause-gadget-mod}.
It has three incoming paths, connecting variables to the clause, and the last rectangle
is of length $14$. The gadget is designed to satisfy the following properties 
similar to the Clause Lemma in Section~\ref{sec:NP}:
\begin{enumerate}
\item If all three incoming thin rectangles have their vertices in the back half,
      the projection cannot be completed to an embedding.
\item If the projection is half-integral, and at least one of the incoming rectangles
      has its vertex in the front pixel, we can place a vertex in the clause rectangle
      such that the orientation of all incident triangles is preserved.
\end{enumerate}

\begin{figure}
\begin{center}
\includegraphics[width=6cm]{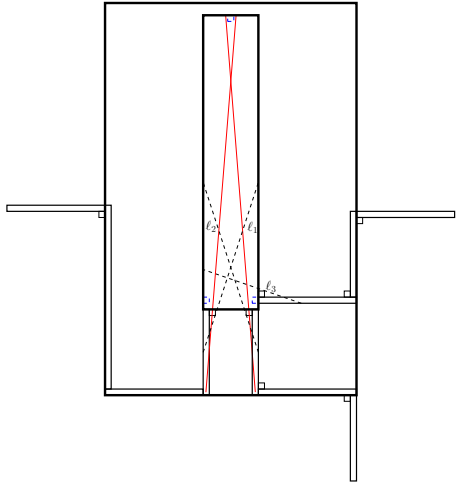}
\end{center}
\caption{The modified clause gadget: If the vertices of all three incoming rectangles are placed in
the back half, the projection of the clause triangle must be left of $\ell_1$, right of $\ell_2$
and below $\ell_3$, which is impossible. If the projection is integral and at least one of the
incoming rectangles has its vertex in the front pixel, we can choose the center of one of the dashed
pixels as projection for the clause rectangle.}
\label{fig:clause-gadget-mod}
\end{figure}
We can show that a formula encoded in a \texttt{grid3sat} instance is satisfiable
if and only if the corresponding partition has an embedding.
The proof is the same as for Lemma~\ref{lem:sat_emb_equi}: Given an embedding,
all vertices in a variable cycle must be either mapped to the front half or
the back half, and we set the variable to $1$ or $0$ accordingly. 
This is a satisfying assignment, because for any clause, one of the incoming 
rectangles must have its vertex in the front half, and applying the L-joint lemma
backwards ensures that the corresponding literal satisfies the clause.
Vice versa, given a satisfying assignment, we can construct a half-integral
projection in exactly the same way as in the proof of Lemma~\ref{lem:sat_emb_equi}
which yields an embedding.

\section{Details of the counterexample constructions}
\label{app:puzzle}

We have created configurations in Theorems~\ref{thm:beta_balanced_high}
and~\ref{thm:counter_beta_balanced_dd} which lead to a simplex whose orientation
is not preserved under the center projection. We have to show that those 
initial configurations can be completed to a $\beta$-balanced partition.
For Theorem~\ref{thm:beta_balanced_high}, this is relatively straight-forward.
For the proof, we need the following definition: 
Let $\partition$ be a partition in $\R^d$, $B$ a box in the partition with corners $(q_1,\ldots,q_D)$
(with $D=2^d$), 
and $p$ a point in the interior of $B$.
\emph{Splitting} $B$ at $p$ means to replace $B$ in $\partition$ with $D$ boxes defined by the corners
$p$ and $q_i$, with $1\leq i\leq D$.

\begin{figure}[hb]
\centering
\includegraphics[width=7.5cm]{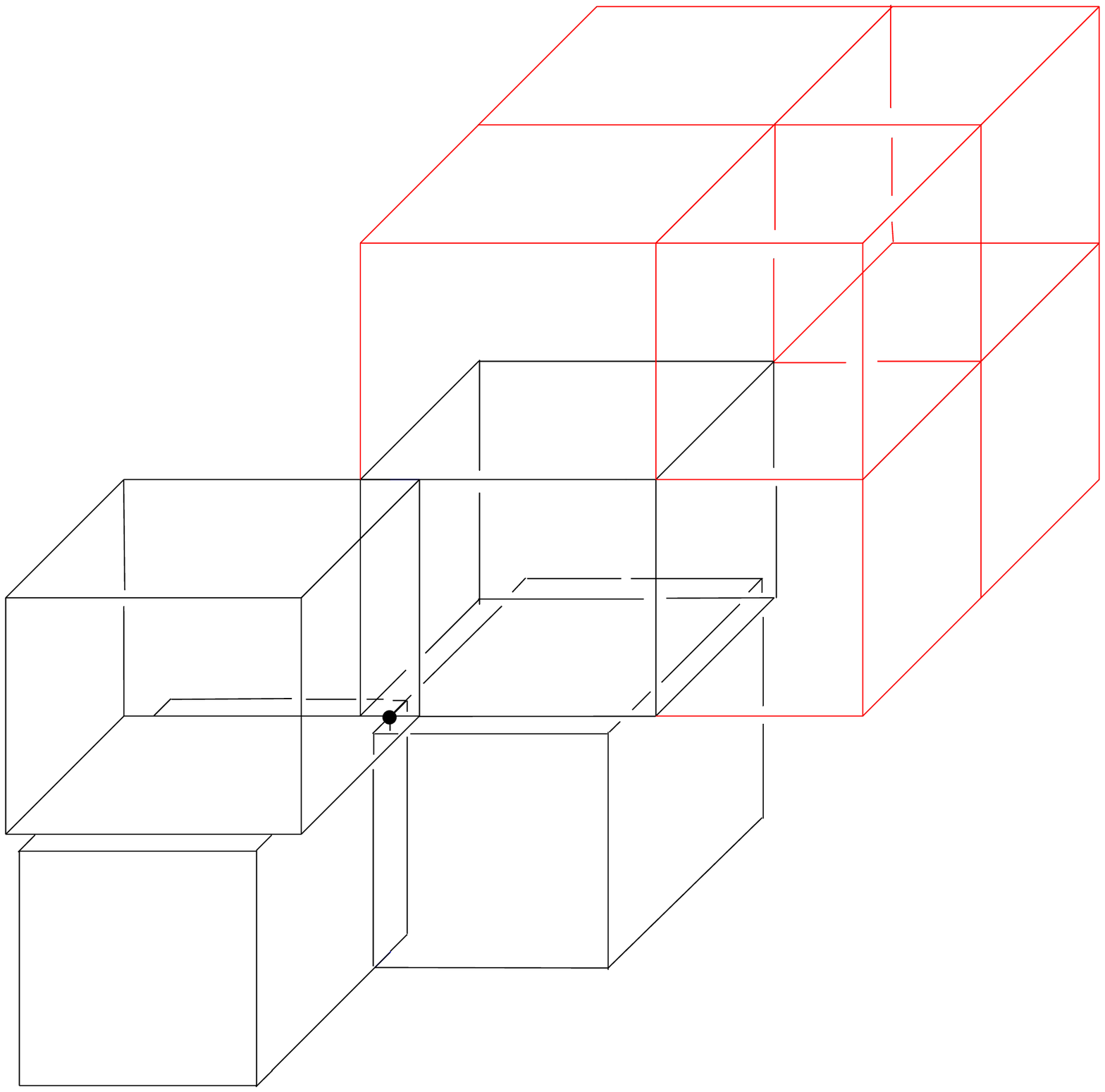}
\caption{A part of the complete partition, after the split of one of the $\tilde{B}_i$.}
\label{fig:completion}
\end{figure}

\begin{lemma}\label{lem:first_filling_lemma}
The partial partition $R_0,\ldots,R_3$ from Theorem~\ref{thm:beta_balanced_high}
can be completed to a $\beta$-balanced partition in $\R^3$.
\end{lemma}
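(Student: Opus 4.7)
The plan is to enclose the initial configuration $R_0,\ldots,R_3$ inside a large bounding cube and to fill the remainder using axis-aligned boxes whose side lengths all lie in $\{b, b+2\}$. Since any two such boxes have balance at most $(b+2)/b\leq\beta$, any completion of this form is automatically $\beta$-balanced; the work is purely combinatorial tiling, not a further balancing argument.

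First I would place $R_0,\ldots,R_3$ inside a bounding cube $B$ of side length $N(b+2)$ for a sufficiently large integer $N$, centered so that the initial configuration sits strictly in the interior of a single coarse cell. I would then coarsely subdivide $B$ into a grid of cells of side $b+2$. All cells of this grid that do not meet any $R_i$ are themselves $\beta$-balanced cubes and can be kept as-is, contributing only cubes of side $b+2$ to the final partition.

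The real work concerns the finitely many cells near the origin that meet the $R_i$. Each $R_i$ is (a permutation of) a $b\times b\times (b+2)$ box, and the four of them share the origin as a corner. The complement of $R_0\cup\cdots\cup R_3$ inside this central region is a disjoint union of axis-aligned rectangular gaps whose side lengths are sums and differences of $b$ and $b+2$. I would tile each gap by repeatedly invoking the splitting operation introduced above the lemma, producing only boxes with sides in $\{b, b+2\}$. Where a coarse grid cell fails to align with the boundary of an $R_i$ or of one of these gaps, I would first split that cell into sub-boxes of side $b$ or $b+2$ so that the partitions on both sides agree on their common boundary.

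The main obstacle is precisely this last bookkeeping step: checking that the gaps around $R_0,\ldots,R_3$, together with the transition layer where the refined coarse cells meet the central region, can truly be tiled by boxes of side $b$ or $b+2$ without leaving a hole or forcing a smaller box (which would break $\beta$-balancing through an adjacency with an $R_i$). This reduces to a small finite case analysis driven by which pair of opposite faces is extended in each $R_i$, and the symmetry of the four-box configuration collapses the cases further; Figure~\ref{fig:completion} illustrates the appearance of the partition after one such cell is split. Once the tiling is exhibited, $\beta$-balancing of the resulting partition is immediate since every box has sides in $\{b, b+2\}$ with ratio at most $\beta$.
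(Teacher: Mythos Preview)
There is a genuine gap. Your plan hinges on tiling the complement of $R_0\cup\cdots\cup R_3$ using only boxes whose sides lie in $\{b,b+2\}$, but the geometry of the $R_i$ does not cooperate with that restriction. Recall from the construction in Theorem~\ref{thm:beta_balanced_high} that each $R_i$ is obtained from a cube of side $b$, with the origin as a corner, by adding a \emph{single} voxel layer at two opposite faces; a typical $R_i$ is $[0,b]\times[0,b]\times[-1,b+1]$, and its near-origin corner sits one unit from the origin. Thus faces of the $R_i$ sit at coordinates such as $-1$ and $b+1$, not at shifts by $b$ or $b+2$; the ``sums and differences of $b$ and $b+2$'' you invoke are all even, whereas the actual offsets are $\pm 1$. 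When $b$ is even this is already fatal by parity: along the vertical line through $(\tfrac12,\tfrac12)$ the segment from the bottom of your bounding cube up to the face $z=-1$ of $R_0$ has odd length, yet would have to be a sum of the even numbers $b$ and $b+2$. For general $b$, any filling box placed flush against the face $z=b+1$ ends at $z\in\{2b+1,2b+3\}$, never at a grid plane $z=k(b+2)$, so the misalignment propagates rather than closing up after one ``transition layer''. (Separately, the union of the $R_i$ has $\ell_\infty$-diameter $2b+2$, so it cannot sit in the interior of a single coarse cell of side $b+2$; and Figure~\ref{fig:completion} in fact depicts a different construction from the one you describe.)

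The paper avoids all of this by not insisting on only two side lengths. It works inside $B=[-2b,2b]^3$ and first partitions $B$ into eight octant-like boxes $\tilde B_1,\ldots,\tilde B_8$, each spanned by a corner of $B$ and the near-origin corner $c_j$ of the unique $R_j$ reaching into that octant; by construction the $\tilde B_i$ are already aligned with the $R_j$. Each $\tilde B_i$ is then split once at an interior point into eight sub-boxes, one of which is the $R_j$ it contains (if any). The outcome is an explicit $64$-box partition whose side lengths range over an interval around $b$ (the paper bounds them by $[b-2,b+2]$), and one simply chooses $b$ large enough that $\beta>(b+2)/(b-2)$ rather than merely $(b+2)/b$. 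The odd unit offsets you overlooked are exactly what forces this wider range of side lengths and the correspondingly stronger choice of $b$.
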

\begin{proof}
Let $b\in\Z$ as in Theorem~\ref{thm:beta_balanced_high}, such that $\beta>\frac{b+2}{b-2}$.
We will construct a partition of $B:=[-2b,2b]^3$, consisting of exactly $64=8\cdot 8$ boxes, 
with $R_0,\ldots,R_3$ among them.
Fix a corner $a_i$ of $B$ (with $1\leq i\leq 8$). 
It lies in an octant of $\R^3$ with respect to the origin, and there
is exactly one box $R_j$ (with $0\leq j\leq 3$) that extends into this octant.
Let $c_j$ denote the corner of $R_j$ that is closest to the origin
(by construction, it is one unit length away from the origin).
Note that $c_j$ does not necessarily lie in the same octant as $a_i$.
We consider the box $\tilde{B}_i$ 
with (opposite) corners $a_i$ and $c_j$.
Observe that the side lengths of $\tilde{B}_i$ are $2b-1$, $2b$, or $2b+1$.
Also, $B_1,\ldots,B_8$ form a partition of $B$, and each $R_j$ (with $0\leq j\leq 3$)
lies in the corner of some $B_i$.

We split each $\tilde{B}_i$
into $8$ boxes: If $\tilde{B}_i$ contains some $R_j$ (with $0\leq j\leq 3$), 
we let $c^\ast_{j}$ denote the corner of $R_j$ opposite to $c_j$, and split $\tilde{B}_i$ at $c^\ast_j$;
clearly, one of the $8$ created boxes is $R_j$. See Figure~\ref{fig:completion} for an illustration.
If $\tilde{B}_i$ does not contain any $R_j$, we simply split at an integer point closest to its center.
It is simple to verify that all boxes created by these operations have side lengths of at least $b-2$
and at most $b+2$. Since $\frac{b+2}{b-2}$ by assumption, the rectangular partition is $\beta$-balanced,
but not center-embeddable, because it contains $R_0,\ldots,R_3$.
\end{proof}

For Theorem~\ref{thm:counter_beta_balanced_dd}, the completion of the initial configuration 
is technically more challenging than 
in Theorem~\ref{thm:beta_balanced_high} because we can only use 
square boxes to fill the empty space. Indeed, our construction needs much
more filling space to ``balance out'' the differences between the
square boxes, and it requires square boxes of many different size, all close to $b$.

The construction needs two preparatory lemmas. The first is a basic fact from elementary number theory

\begin{lemma}\label{lem:coprime_lemma}
Let $z_0:=1$ and $z_1,\ldots,z_k$ be the first $k$ prime numbers (starting with $z_1=2$).
Let $b\in\Z$ such that $b=\lambda\cdot (z_1\cdots z_k)+1$ with some $\lambda\in\Z$.
Then, the $(k+1)$ numbers
$b+z_0-1,\ldots,b+z_k-1$
are pairwise coprime.
\end{lemma}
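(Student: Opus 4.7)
The plan is to fix two distinct indices $0 \leq i < j \leq k$, set $n_i := b + z_i - 1$ and $n_j := b + z_j - 1$, assume for contradiction that some prime $p$ divides $\gcd(n_i,n_j)$, and split into two cases depending on whether $p$ lies in $\{z_1,\ldots,z_k\}$ or not. The starting observation I would record is that the hypothesis $b = \lambda(z_1\cdots z_k)+1$ immediately gives $b \equiv 1 \pmod{z_m}$ for every $m \in \{1,\ldots,k\}$, and therefore $n_i \equiv z_i \pmod{z_m}$ for every index $i$.

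In the first case, $p = z_m$ for some $m \in \{1,\ldots,k\}$. Then $z_m \mid n_i$ combined with $n_i \equiv z_i \pmod{z_m}$ forces $z_m \mid z_i$; since $z_0 = 1$ and $z_1,\ldots,z_k$ are distinct primes, this can only happen when $m = i$, and in particular $i \geq 1$. The same reasoning applied to $n_j$ forces $m = j$, contradicting $i \neq j$.

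In the remaining case $p \notin \{z_1,\ldots,z_k\}$, the prime $p$ must satisfy $p > z_k$ because the $z_m$ are the \emph{first} $k$ primes. On the other hand, $p$ divides $n_j - n_i = z_j - z_i$, and the estimate $1 \leq z_j - z_i \leq z_j \leq z_k < p$ (valid because $z_0 = 1 < z_1 < \cdots < z_k$) rules out any such divisibility. Both cases having been eliminated, I conclude $\gcd(n_i, n_j) = 1$ for every pair $i \neq j$.

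I do not anticipate any real obstacle; the argument is elementary modular arithmetic. The only point that warrants a moment of care is the boundary index $i = 0$, where $z_0 = 1$ is not a prime. This is absorbed uniformly into the case analysis: $n_0 = b \equiv 1 \pmod{z_m}$ already rules out $p = z_m$ dividing $n_0$, and the difference bound $z_j - z_0 = z_j - 1$ still lies strictly between $0$ and $p$ in the second case.
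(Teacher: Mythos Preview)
Your proof is correct and follows essentially the same route as the paper's. The paper merely compresses your two cases into one step: it first uses the difference bound $|n_j-n_i|<z_k$ to conclude immediately that any common prime factor $p$ must equal some $z_\ell$, and then derives $z_\ell\mid z_i$ and $z_\ell\mid z_j$ from $z_\ell\mid(b-1)$, exactly as in your first case.
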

\begin{proof}
Assume for a contradiction that $b+z_i-1$ and $b+z_j-1$ have a common prime factor $p$.
Then $p=z_\ell$ for $1\leq\ell\leq k$, because the two numbers differ by less than $z_k$.
Moreover, $b-1$ is a multiple of $z_\ell$. It follows that $z_\ell$ divides both $z_i$
and $z_j$, which is a contradiction.
\end{proof}

The next lemma proves that we can fill arbitrary
large boxes in $\R^d$ just using square boxes with $2^d$ pairwise coprime side lengths.

\begin{lemma}\label{lem:square_filling}
Let $D:=2^d$ and let $b_1<\ldots<b_{D}$ be pairwise coprime integers.
Then there exists some $L_0\in\Z$ such that every box with smallest side length at least $L_0$
can be partitioned into square boxes of side lengths $b_1,\ldots,b_D$. Moreover, the partition
can be chosen such that a box of side length $b_1$ is in one of its corners.
\end{lemma}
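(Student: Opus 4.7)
Plan: The proof is a constructive tiling argument based on the Chicken McNugget theorem (Sylvester--Frobenius) combined with careful geometric assembly across the $d$ coordinate directions. Throughout, write $N := b_1 b_2 \cdots b_D$, the product of all the coprime side lengths.

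As a first, one-dimensional ingredient I would establish the following: since $\gcd(b_1,b_2)=1$, for every integer $L$ at least the Frobenius bound $(b_1-1)(b_2-1)$ one can write $L = \alpha b_1 + \beta b_2$ with $\alpha,\beta \geq 0$, and for $L$ a bit larger one can additionally insist that $\alpha \geq 1$ so that a $b_1$-segment lies at a prescribed end. Moreover, for $L$ sufficiently large (depending only on $b_1,b_2$), one has enough slack to impose a prescribed residue on $\alpha$ modulo $b_2$ (or on $\beta$ modulo $b_1$). This residue freedom is what will allow us to line up tilings across dimensions.

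Next, I would treat the \emph{aligned} case where every side length $L_j$ is a multiple of $N$: then the box decomposes trivially as a grid of $N$-cubes, each of which in turn partitions as a grid of $b_1$-cubes, placing a $b_1$-cube in any designated corner. For the general case with only $L_j \geq L_0$, I would carve off from each ``positive'' face of the box a thin correction slab, chosen so that the bulk box left behind has all sides divisible by $N$. Each correction slab has one thin dimension of thickness $\delta_j \in \{0,\ldots,N-1\}$ and the remaining dimensions those of the original box. Using the one-dimensional result in the thin direction, one subdivides the slab into sub-slabs of thickness $b_i$; inside each such sub-slab the only admissible cubes are $b_i$-cubes, and a trivial grid tiling works as soon as the other dimensions are multiples of $b_i$ (which one arranges by a further level of sub-carving, again using the residue-flexibility of Sylvester--Frobenius).

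The principal obstacle is the interaction at corners, edges, and lower-dimensional faces where several correction slabs meet; this is where the full strength of the coprimality hypothesis on all $D=2^d$ sizes is consumed. Intuitively one cube size is reserved for each of the $2^d$ orthant-corners of the correction region, and pairwise coprimality allows all the alignment constraints in the $d$ coordinate directions to be satisfied simultaneously, provided the box is large enough. The final threshold $L_0$ depends only on the $b_i$ (and not on the $L_j$), being obtained by compounding the Sylvester--Frobenius thresholds $(b_1-1)(b_2-1)$ across the $d$ directions and the at most $d$ levels of recursion; the hardest bookkeeping lies precisely in checking that the recursion terminates with all residues satisfied and that the ``corner'' $b_1$-cube can always be placed as claimed.
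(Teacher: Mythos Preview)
Your approach has a genuine gap. You write that each correction slab has thin thickness $\delta_j \in \{0,\ldots,N-1\}$ chosen so that $L_j - \delta_j$ is a multiple of $N$, and then you want to ``subdivide the slab into sub-slabs of thickness $b_i$'' using the one-dimensional Sylvester--Frobenius result. But $\delta_j$ is forced to be the residue $L_j \bmod N$, which can be any value in $\{0,\ldots,N-1\}$; in particular it can be smaller than $b_1$, in which case no $b_i$-cube fits in the thin direction at all and the slab is untileable. You cannot simultaneously have the bulk divisible by $N$ and the slab thick enough for Frobenius unless you allow $\delta_j$ to exceed $N$, and then your bulk/slab decomposition no longer looks like what you describe. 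The corner analysis is also left entirely at the level of intuition (``one cube size is reserved for each of the $2^d$ orthant-corners''), and it is not clear this can be made to work; you would need to tile a small $\delta_1 \times \cdots \times \delta_d$ corner block by cubes, which faces the same thinness obstruction.

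The paper's argument avoids all of this by a clean induction on the dimension that also explains why exactly $2^d$ pairwise coprime sizes are used. One splits $\{b_1,\ldots,b_D\}$ into two halves of size $2^{d-1}$. By induction, the projection of the box into $\R^{d-1}$ can be tiled using the first half; stacking $P_1/b_i$ copies of each $b_i$-cube vertically (where $P_1$ is the product of the first half) produces a $d$-dimensional slab of uniform height $P_1$. The same construction with the second half gives a slab of height $P_2$. Since $P_1$ and $P_2$ are coprime, Sylvester--Frobenius in the single direction $x_d$ lets you stack $\lambda_1$ slabs of the first kind and $\lambda_2$ of the second to reach the required height $L_d$. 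There are no correction regions and no corner bookkeeping; the only threshold needed is that every $L \geq L_0$ be representable as $\lambda_1 \prod_{b\in B_1} b + \lambda_2 \prod_{b\in B_2} b$ for every split $B_1 \sqcup B_2$ of the $b_i$.
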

\begin{proof}
Set $B:=\{b_1,\ldots,b_{D}\}$.
Choose $L_0$ such that for any integer $L\geq L_0$, and any two
disjoint subsets $B_1,B_2$ of $B$, 
$L$ can be represented as 
\[L=\lambda_1 \prod_{b\in B_1}b + \lambda_2\prod_{b\in B_2}b\]
with $\lambda_1,\lambda_2$ non-negative numbers.
We prove inductively that any box with minimal side length at least $L_0$
can be partitioned. If $d=1$, the statement is trivial.
Let $B$ be such a box in $\R^d$ with $d>1$ 
and let $\tilde{B}$ denote the same
box projected into $\R^{d-1}$. Using the induction hypothesis with the numbers
$b_1,\ldots,b_{D/2}$, we can partition $\tilde{B}$ with square boxes in $\R^{d-1}$.
If we carry this partition into $\R^d$ with the lower side of every box being in
the plane $x_d=0$, this does not yield a box because
the square boxes have different heights in $x_d$-direction; however, we can stack up
boxes to match their heights. Formally, define $P_1=b_1\cdots b_{D/2}$
and place $P_1/b_i$ vertical copies for a square box of length $b_i$. In this way,
we construct a box whose projection to $\R^{d-1}$ is $\tilde{B}$, and whose length
in $x_d$-direction is $P_1$.
We can do exactly the same construction using the integers $b_{D/2+1},\ldots,b_{D}$,
which yields a box of length $P_2=b_{D/2+1}\cdots b_{D}$ in $x_d$-direction.
Let $L_d$ be the length of $B$ in $x_d$-direction.
By assumption, we can find non-negative integers $\lambda_1,\lambda_2$ with
$\lambda_1 P_1 + \lambda_2 P_2=L_d$.
Thus, placing $\lambda_1$ copies of the box of first type and $\lambda_2$ copies
of the box of second type gives the desired box.
\end{proof}

\begin{lemma}
The partial partition $R_0,\ldots,R_d$ from Theorem~\ref{thm:counter_beta_balanced_dd} can be completed
to a $\beta$-balanced cubical partition in $\R^3$.
\end{lemma}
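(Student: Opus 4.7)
The plan is to adapt the strategy of Lemma~\ref{lem:first_filling_lemma}, but using only cubes, with Lemmas~\ref{lem:coprime_lemma} and~\ref{lem:square_filling} as the filling tool. Throughout I use the freedom in Theorem~\ref{thm:counter_beta_balanced_dd} to scale $a,b$ simultaneously, keeping $b/a$ in its required interval $(d/(d-2),\beta)$.

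The first step is to fix all numerical parameters. Set $D:=2^d$ and pick $b$ of the form $\lambda\, z_1 \cdots z_{D-1}+1$ for a large $\lambda\in\Z$; by Lemma~\ref{lem:coprime_lemma}, the $D$ integers $b_i:=b+z_i-1$ for $i=0,\ldots,D-1$ are then pairwise coprime. Taking $\lambda$ large I simultaneously arrange that (i) every pairwise ratio of numbers in $\{a,b,b_0,\ldots,b_{D-1}\}$ lies in $(1/\beta,\beta)$, since each $b_i/b\to 1$ while $b/a$ is fixed strictly below $\beta$; (ii) $b$ exceeds the constant $L_0$ from Lemma~\ref{lem:square_filling} applied to $b_0,\ldots,b_{D-1}$; and (iii) $a$ is also coprime to each $b_i$, which will let me include cubes of side $a$ in the filling if needed.

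Next, I enclose the initial configuration in a bounding cube $B$ of side $Nb$ for a large constant $N$, positioned so that $R_0\cup\cdots\cup R_d$ sits deep in its interior. Extending the supporting hyperplanes of all faces of the $R_i$ through $B$ produces, in a finite (constant in $d$) number of cuts, a decomposition of $B$ into hyper-rectangular cells with $R_0,\ldots,R_d$ among them. For $N$ large enough, every other cell has all side lengths at least $L_0$, so Lemma~\ref{lem:square_filling} applies and tiles each such cell by cubes of sides drawn from $\{b_0,\ldots,b_{D-1}\}$. To make tilings match across shared faces (between two such cells, between a cell and an $R_i$, and at the outer boundary of $B$), I use the moreover-clause of Lemma~\ref{lem:square_filling} that places a cube of the prescribed smallest side $b_0$ in a chosen corner, together with a small inductive modification that lets me additionally prescribe full rows of cubes of fixed side along designated faces. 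Propagating these choices outward from the $R_i$, the tilings of adjacent cells can be chosen compatibly, yielding a well-defined partition of $B$.

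Finally, every pair of intersecting cubes in the completed partition has side lengths in $\{a,b,b_0,\ldots,b_{D-1}\}$, so by the choice of $\lambda$ all pairwise ratios are at most $\beta$; the partition is cubical and $\beta$-balanced, and still contains $R_0,\ldots,R_d$, so the center projection fails to preserve the orientation of the $d$-simplex they span. The main obstacle is the face-matching step: Lemma~\ref{lem:square_filling} describes only a single box, and gluing partitions of adjacent boxes while preserving the cubical, $\beta$-balanced constraints requires either the small strengthening of Lemma~\ref{lem:square_filling} sketched above or the insertion of transition layers that absorb mismatches; everything else is bookkeeping, given the parameter choice in the first step.
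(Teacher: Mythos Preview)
Your decomposition step has a genuine gap. Extending all supporting hyperplanes of the $R_i$ through $B$ does \emph{not} leave the $R_i$ as cells: for instance, $R_2=[-b+1,1]\times[0,b]\times[-b,0]^{d-2}$ contributes the hyperplane $x_1=1$, which slices through $R_1=[0,b]\times[-b,0]^{d-1}$. Worse, the $\delta=1$ offsets in the centres of $R_2,\ldots,R_d$ force hyperplanes at $x_k=0$ and $x_k=1$ in the same coordinate, so the arrangement has slabs of width~$1$ regardless of $N$; your claim that ``every other cell has all side lengths at least $L_0$'' is therefore false, and these thin cells cannot be tiled by cubes of side $\geq a$, let alone by Lemma~\ref{lem:square_filling}. (Incidentally, condition~(ii) is circular: $L_0$ grows roughly like a product of the $b_i$, hence much faster than $b$, so $b>L_0$ is unachievable. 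Fortunately it is also unnecessary.)

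The paper avoids all of this with a much coarser decomposition into exactly $2^d$ boxes $\tilde{B}_s$, one anchored at each corner of $B=[-L,L]^d$, in the spirit of Lemma~\ref{lem:first_filling_lemma}. The inner corner of $\tilde{B}_s$ is the corner of $R_j$ nearest the origin, where $R_j$ is the unique initial cube reaching into the orthant of $s$; this yields side lengths in $\{L-1,L,L+1\}$, all at least $L_0$ once $L$ is chosen large. Exactly $d+1$ of these $2^d$ boxes contain some $R_i$ in a corner, and the ``moreover'' clause of Lemma~\ref{lem:square_filling} places a cube of side $b_1=b$ there. The box containing $R_0$ is handled separately: it is $[-L,0]^d$, a genuine cube of side $L$, and since $L$ is chosen to be a multiple of $a$ it is tiled by $a$-cubes directly.

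Finally, your face-matching worry is a red herring. A hyper-rectangular partition only requires disjoint interiors and full coverage; adjacent tilings need not align along shared faces. Filling each $\tilde{B}_s$ independently already gives a valid partition, and $\beta$-balancedness follows simply because every cube used has side length in $\{a,b_1,\ldots,b_D\}$, all pairwise ratios of which are below $\beta$ by construction.
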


\begin{proof}

The first step is to choose $a$ and $b$ appropriately. Note that the proof of 
Theorem~\ref{thm:planar_balancing} requires that $\beta>\frac{b}{a}\geq\frac{d}{d-2}$,
and such that $\det M^{(1)}_d$ is negative. 
We impose additional conditions:
Let $D:=2^d$ and $z_1,\ldots,z_{D-1}$ be the sequence of the first $D-1$ prime numbers
(note that the magnitude of $z_{D-1}$ only depends on $d$).
We choose $a$ and $b$ such that 
$b=\lambda z_1\cdots z_{D-1}+1$ for some $\lambda\in\Z$, such that
\[\beta>\frac{b+z_{D-1}-1}{a}>\frac{b}{a}\geq \frac{d}{d-2},\]
\noindent and such that $\det M^{(1)}_d$ is negative.
In particular, applying Lemma~\ref{lem:coprime_lemma} yields $D$ pairwise coprime integers
$b:=b_1<\ldots<b_D=b+z_{D-1}-1$.

Lemma~\ref{lem:square_filling} applied on the sequence $b_1,\ldots,b_D$ 
asserts the existence of some $L_0$ such that any box with side lengths at least $L_0$
can be filled. We choose some $L\geq L_0+1$ that is a multiple of $a$,
and set $B:=[-L,L]^d$. We have to show that we can construct a partition of $B$ 
containing $R_0,\ldots,R_d$
with square boxes of sizes $a,b_1,b_2,\ldots,b_D$ only. 
By our choice of $a$ and $b$, the so-obtained cubical
partition is $\beta$-balanced, and the claim is proven.

The construction works similar as
in Lemma~\ref{lem:first_filling_lemma}.
We first partition $B$
into $D$ boxes $\tilde{B}_j$ 
(which are not square boxes in general), each anchored at one corner of $B$,
and with $d+1$ of them containing one of the initial square boxes $R_i$.
Each $\tilde{B}_j$ has side lengths of at least $L-1\geq L_0$; therefore,
it can be filled with squares boxes of lengths $b_1,\ldots,b_D$ according to
Lemma~\ref{lem:square_filling}. If 
$\tilde{B}_j$ contains some $R_i$, we can enforce that $R_i$ is in the corner
of the filling, as required. A special case is that $\tilde{B}_j$
contains $R_0$ which is of length $a$. However, by construction,
this $\tilde{B}_j$ is a square box of length $L$, and we can simply fill
it with square boxes of length $a$ because $L$ is a multiple of $a$.
\end{proof}

\section{Determinant bound}
\label{app:determinant}

We prove in this section that for the $(d+1)\times (d+1)$ matrix
\[M^{(0)}_d:=\left(\begin{array}{cccccc}
1 & -\frac{a}{2} &  & \ldots &  & -\frac{a}{2} \\
1 & \frac{b}{2} & -\frac{b}{2} & \ldots & \ldots & -\frac{b}{2}\\
1 & -\frac{b}{2} & \frac{b}{2} & -\frac{b}{2} & \ldots & -\frac{b}{2}\\
\vdots & \vdots & \ddots & \ddots & \ddots & \vdots\\
1 & -\frac{b}{2} & \ldots & -\frac{b}{2} & \frac{b}{2} & -\frac{b}{2}\\
1 & -\frac{b}{2} & \ldots & \ldots & -\frac{b}{2} & \frac{b}{2}
\end{array}\right),\]
it holds that 
\[\det M^{(0)}_d=\frac{1}{2}b^{d-1}a - \frac{1}{2}(d-2)b^d=\frac{1}{2}b^{d-1}(da-(d-2)b).\]

Indeed, we subtract the $(i-1)$-st row from the $i$-th row for $i=n+1,\ldots,2$ to obtain
\[\det M^{(0)}_d=\det \left(\begin{array}{cccccc}
1 & -\frac{a}{2} &  & \ldots &  & -\frac{a}{2} \\
 & \frac{a+b}{2} & \frac{a-b}{2} & \ldots & \ldots & \frac{a-b}{2}\\
0 & -b & b &  &  & \\
0 &  & -b & b & & \\
\vdots & & & \ddots & \ddots &  \\
0 & & & & -b & b
\end{array}\right)=\det \left(\begin{array}{ccccc}
\frac{a+b}{2} & \frac{a-b}{2} & \ldots & \ldots & \frac{a-b}{2}\\
 -b & b &  &  & \\
  & -b & b & & \\
 & & \ddots & \ddots &  \\
& & & -b & b
\end{array}\right),\]
where empty spots are zero.
We simplify further by 
factoring out $1/2$ in the second row,
factoring out $b$ in rows $3,\ldots,d+1$,
and shifting the second row to the bottom, thereby changing the sign
of every row that goes up. With these steps, we obtain
\[
\det M^{(0)}_d=
\det \left(\begin{array}{ccccc}
\frac{a+b}{2} & \frac{a-b}{2} & \ldots & \ldots & \frac{a-b}{2}\\
 -b & b &  &  & \\
  & -b & b & & \\
 & & \ddots & \ddots &  \\
& & & -b & b
\end{array}\right)
=\frac{1}{2}b^{d-1}\det \left(\begin{array}{ccccc}
1 & -1 &  &  & \\
 & 1 & -1 & & \\
& & \ddots & \ddots &  \\
& & & 1 & -1\\
a+b & a-b & \ldots & \ldots & a-b\\

\end{array}\right).\]
It is straight-forward to verify that the determinant of the rightmost $(d\times d)$-matrix
equals the sum of its last row, which is $da-(d-2)b$.

\section{Details on quantifier elimination}
\label{app:qe}
The proof of Theorem~\ref{thm:cubical_partition_bound_3d} relies on the 
non-existence of planes that intersect certain point sets in $\R^3$.
We will give some details on how we prove this non-existence.
Recall from the proof that
we defined $a$ and $b$ to be the shortest and longest sides among the cubes $R_0,\ldots,R_3$.
By scaling, we can just assume that $a=1$. Because of the assumption, it
follows that $b\leq \beta < 3$. 

\paragraph{Regular case}
The regular configuration
\[\{0\}, \{1\}, \{2,3\}, \{4,5,6,7\},\]
induces the point sets
\begin{eqnarray*}
C_0^{(b)}&:=&\mathrm{hull}\{(-1,-1,-1),(-b,-b,-b)\}\\
C_1^{(b)}&:=&\mathrm{hull}\{(1,-1,-1),(b,-b,-b)\}\\
C_2^{(b)}&:=&\mathrm{hull}\{(-1,1,-1),(-b,b,-b),(1,1,-1),(b,b,-b)\}\\
         &  &-\mathrm{hull}\{(-1,1,-1),(-b,b,-b)\}-\mathrm{hull}\{(1,1,-1),(b,b,-b)\}\\
C_3^{(b)}&:=&\mathrm{hull}\{(-1,-1,1),(-b,-b,b),(-1,1,1),(-b,b,b),(1,-1,1),(b,-b,b),(1,1,1),(b,b,b)\}\\
         &  &-\mathrm{hull}\{(-1,-1,1),(-b,-b,b),(-1,1,1),(-b,b,b)\}\\
         &  &-\mathrm{hull}\{(-1,-1,1),(-b,-b,b),(1,-1,1),(b,-b,b)\}\\
         &  &-\mathrm{hull}\{(-1,1,1),(-b,b,b),(1,1,1),(b,b,b)\}\\
         &  &-\mathrm{hull}\{(1,-1,1),(b,-b,b),(1,1,1),(b,b,b)\},
\end{eqnarray*}
where $\mathrm{hull}$ stands for the convex hull of a point set.
Assume for a contradiction that 
$e$ is a plane intersecting $C_0^{(b)},\ldots,C_3^{(b)}$. 
Since $C_i^{(b)}\subset C_i^{(3)}$, $e$ intersects
$C_0^{(3)},\ldots,C_3^{(3)}$ as well. 
Let $e$ be defined by $t_0,\ldots,t_3\in\R$ via the equation
\[(t_0,t_1,t_2)\cdot(x,y,z)+t_3=0.\]
We first claim that $t_0\neq 0$. Indeed, if $t_0=0$, the projection of $e$ into the $yz$-plane
is a line. This line has to intersect the projections of the $C_i^{(3)}$. However, these projections
give the subsets of the rectangles defined in Figure~\ref{fig:beta_balanced_plane} (left),
and it can be seen with similar methods as in Theorem~\ref{thm:planar_balancing} that no
such line exists for $\beta<3$. Moreover, it can be easily verified that $e$ is not
parallel to any coordinate plane.

Since $t_0\neq 0$, we can normalize such that $t_0=1$. We define the induced mapping 
\[E:\R^3\rightarrow\R, (x,y,z)\mapsto (1,t_1,t_2)\cdot (x,y,z) + t_3.\]
Clearly, $e=E^{-1}(0)$, and
if for two points $p_1,p2\in\R^3$, $E(p_1)\cdot E(p_2)<0$, 
then $p_1$ and $p_2$ are separated in different halfspaces by $e$.
From this property, we can immediately deduce

\begin{lemma}\label{lem:formula_1}
$e$ intersects each $C_i^{(3)}$, $1\leq i\leq 4$ if and only if $E$ satisfies the following four formulas

\begin{eqnarray*}
E((-1,-1,-1))\cdot E((-3,-3,-3))\leq 0&&\\
E((1,-1,-1))\cdot E((3,-3,-3))\leq 0&&\\
E((-1,1,-1))\cdot E((3,3,-3))< 0&\vee& E((1,1,-1))\cdot E((-3,3,-3))< 0\\
(E((-1,-1,1))\cdot E((3,3,3))< 0 &\vee& E((1,-1,1))\cdot E((-3,3,3))< 0\\
&\vee&E((-1,1,1))\cdot E((3,-3,3))< 0\\
&\vee& E((1,1,1))\cdot E((-3,-3,3))< 0)
\end{eqnarray*}
\end{lemma}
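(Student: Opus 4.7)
My plan is to translate ``$e$ intersects $C_i^{(3)}$'' into an algebraic sign condition on $E$ at the vertices of $C_i^{(3)}$, exploiting that $e = E^{-1}(0)$ and that $E$ is affine. The core observation is that an affine function attains its extremes on a convex polytope at a vertex, so the plane $e$ meets the convex hull of a set of points iff $E$ does not have a common strict sign at all of them; the weak-vs-strict distinction in the four formulas reflects exactly which boundary pieces are included in or excluded from $C_i^{(3)}$.

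For $C_0^{(3)}$ and $C_1^{(3)}$, the set is a closed segment $\mathrm{hull}\{p,q\}$. Since $E$ restricted to this segment is affine and takes every value between $E(p)$ and $E(q)$, the plane $e$ meets the segment iff $E(p)\cdot E(q)\le 0$, yielding the first two (weak) formulas.

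For $C_2^{(3)}$, the set is a planar trapezoid in $\{y+z=0\}$ with vertices $A,B,C,D$ (the four points in the formula) minus the two non-parallel closed sides $AB$ and $CD$ (and their endpoints); the included part is the relative interior together with the open parallel sides $AC$ and $BD$. The diagonals are $AD$ and $CB$, and I will show that $e$ meets $C_2^{(3)}$ iff $E$ takes strict opposite signs at the endpoints of at least one diagonal. The ``if'' direction is immediate: a strict sign change along a diagonal produces a zero of $E$ in the open diagonal, which lies in the relative interior of the trapezoid, hence in $C_2^{(3)}$. For the ``only if'' direction, the normalization $t_0=1$ rules out $e=\{y+z=0\}$, so $\ell:=e\cap\{y+z=0\}$ is a single line; if $\ell$ meets $C_2^{(3)}$ it must enter the relative interior of the trapezoid or an open parallel side, which in either case forces $E$ to take strict opposite signs at some pair of vertices. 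A short case analysis on $(\mathrm{sign}\,E(A),\ldots,\mathrm{sign}\,E(D))$, using the affine dependency $D=-3A+3C+B$ (so $E(D)=-3E(A)+3E(C)+E(B)$) to express one value in terms of the others, then shows that at least one of the two diagonals must itself have strict opposite signs.

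For $C_3^{(3)}$, the set is a $3$-polytope with $8$ vertices and $6$ faces, of which four ``side'' faces (together with their boundaries) are excluded, leaving the interior and the two parallel ``top'' and ``bottom'' faces; the four vertex-pair products in the formula are precisely its four space diagonals. The argument parallels the $C_2^{(3)}$ case: the ``if'' direction is immediate from a zero on an open space diagonal, and for the ``only if'' direction, $t_0=1$ prevents $e$ from containing the polytope, so $e$ meets $C_3^{(3)}$ only by strictly entering the interior or one of the two parallel faces, which forces a strict sign change along some space diagonal via a case analysis using the affine relations among the eight vertex values of $E$. The main obstacle, and the only non-routine step, is organizing this sign-pattern case analysis cleanly for $C_2^{(3)}$ and $C_3^{(3)}$; the very specific coordinate choice keeps the number of essentially distinct cases small, but the bookkeeping requires care.
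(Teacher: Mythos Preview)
Your proposal is correct and follows essentially the same approach as the paper. Both arguments reduce the intersection question to sign conditions on $E$ at the vertices of each $C_i^{(3)}$, handle the two segments $C_0^{(3)},C_1^{(3)}$ trivially, and for $C_2^{(3)}$ and $C_3^{(3)}$ use the previously established facts ($t_0\neq 0$, $e$ not parallel to a coordinate plane) to argue that $e$ must pass through the relative interior and hence strictly separate some pair of opposite vertices (a diagonal of the trapezoid, respectively a space diagonal of the polytope). The only notable difference is stylistic: where the paper simply asserts geometrically that the intersection line or plane ``separates two opposite vertices,'' you propose to organize this as an explicit sign-pattern case analysis supported by the affine dependencies among the vertex values of $E$; this is a perfectly valid and somewhat more self-contained way to justify the same conclusion.
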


\begin{proof}
The $\Leftarrow$-direction is simple to proof.
For the $\Rightarrow$-direction, note that the
first two statements are trivially satisfied. For the third statement, 
note first that $e$ cannot contain
the whole line segment from $(-1,1,-1)$ to $(1,1,-1)$, neither the line segment
from $(-3,3,-3)$ to $(3,3,-3)$ (because this would contradict $t_0\neq 0$).
Let $f$ be the plane
that contains $C_2^{(3)}$. The intersection of $e$ and $f$ is a line on $f$
that intersects the boundary of $C_2^{(3)}$ in exactly two points. It is therefore
clear that this line separates two opposite vertices in two halfplanes.
Thus, this pair is separated into two halfspaces by $e$ and this is precisely
what is checked by the third statement.

For the last statement, note that $e$ cannot completely contain
any boundary face of $C_3^{(3)}$, because this would mean that $e$ is
parallel to a coordinate plane which is easily checked to be impossible.
Therefore, $e$ intersects through the interior of $C_3^{(3)}$.
Therefore $e$ separates at least one of the four pairs of opposite vertices of the polytope,
and this is checked in the forth statement.
\end{proof}

With Lemma~\ref{lem:formula_1}, we can formulate a quantified formula in the real variables 
$t_1,t_2,t_3$ which is true if and only if a plane $e$ as required exists. However, using
a quantifier elimination program, we can easily compute that the formula is in fact false.

\paragraph{Singular case} This case is completely analogous. For completeness, we write down
the sets:
\begin{eqnarray*}
C_0^{(b)}&:=&\mathrm{hull}\{(-1,-1,-1),(-b,-b,-b),(1,-1,-1),(b,-b,-b)\}\\
         &  &-\mathrm{hull}\{(-1,-1,-1),(-b,-b,-b)\}-\mathrm{hull}\{(1,-1,-1),(b,-b,-b)\}\\
C_1^{(b)}&:=&\mathrm{hull}\{(-1,1,-1),(-b,b,-b),(1,1,-1),(b,b,-b)\}\\
         &  &-\mathrm{hull}\{(-1,1,-1),(-b,b,-b)\}-\mathrm{hull}\{(1,1,-1),(b,b,-b)\}\\
C_2^{(b)}&:=&\mathrm{hull}\{(-1,-1,1),(-b,-b,b),(-1,1,1),(-b,b,b)\}\\
         &  &-\mathrm{hull}\{(-1,-1,1),(-b,-b,b)\}-\mathrm{hull}\{(-1,1,1),(-b,b,b)\}\\
C_3^{(b)}&:=&\mathrm{hull}\{(1,-1,1),(b,-b,b),(1,1,1),(b,b,b)\}\\
         &  &-\mathrm{hull}\{(1,-1,1),(b,-b,b)\}-\mathrm{hull}\{(1,1,1),(b,b,b)\}.
\end{eqnarray*}

Defining $e$ and $E$ as above, we obtain:

\begin{lemma}\label{lem:formula_2}
$e$ intersects each $C_i^{(3)}$, $1\leq i\leq 4$ if and only if $E$ satisfies the following four formulas

\begin{eqnarray*}
E((-1,-1,-1))\cdot E((3,-3,-3))< 0&\vee& E((1,-1,-1))\cdot E((-3,-3,-3))< 0\\
E((-1,1,-1))\cdot E((3,3,-3))< 0&\vee& E((1,1,-1))\cdot E((-3,3,-3))< 0\\
E((-1,-1,1))\cdot E((-3,3,3))< 0&\vee& E((-1,1,1))\cdot E((-3,-3,3))< 0\\
E((1,-1,1))\cdot E((3,3,3))< 0&\vee& E((1,1,1))\cdot E((3,-3,3))< 0.
\end{eqnarray*}
\end{lemma}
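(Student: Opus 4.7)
The plan is to follow verbatim the argument given for Lemma~\ref{lem:formula_1}. In the singular case, each $C_i^{(b)}$ is a planar convex quadrilateral (in fact, a trapezoid) whose two parallel sides and relative interior are retained but whose two non-parallel sides are removed. Concretely, $C_0^{(b)}, C_1^{(b)}, C_2^{(b)}, C_3^{(b)}$ lie in the four planes $y=z$, $y=-z$, $x=-z$, $x=z$, respectively, and the two pairs of points appearing in each disjunction of the lemma are exactly the two diagonal pairs of opposite vertices of the corresponding trapezoid.

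The $\Leftarrow$ direction is immediate. If $E(p)\cdot E(q)<0$ for an opposite-vertex pair $(p,q)$ of $C_i^{(3)}$, then $p$ and $q$ lie strictly on opposite sides of $e=E^{-1}(0)$, so $[p,q]$ crosses $e$ at a point $r$ in the open segment $(p,q)$. The open diagonal of a convex quadrilateral lies entirely in its relative interior and therefore avoids the two excluded non-parallel sides, so $r\in C_i^{(3)}$.

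For the $\Rightarrow$ direction, fix $r\in e\cap C_i^{(3)}$. First I would observe that $e$ cannot be parallel to the plane of $C_i^{(3)}$: the normals of the four trapezoid-planes above all have vanishing first coordinate, whereas the normal of $e$ has first coordinate $t_0=1\neq 0$. Hence $L:=e\cap \mathrm{plane}(C_i^{(3)})$ is a proper line passing through $r$. Next, $L$ cannot coincide with the supporting line of an excluded non-parallel side, for otherwise the only points of the trapezoid lying on $L$ would sit on that excluded side, contradicting $r\in C_i^{(3)}$. Therefore $L$ crosses the relative interior of the trapezoid.

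The last step is the standard planar fact that a line crossing the relative interior of a convex quadrilateral must strictly separate at least one of its two pairs of opposite vertices. If $L$ misses all four vertices, a case split on whether its two boundary crossings lie on opposite or on adjacent sides already yields the separation (with both pairs separated in the first sub-case). The main obstacle I anticipate is the degenerate situation in which $L$ passes through a vertex $v$ of the trapezoid, so that the product $E(v)\cdot E(\text{opposite of }v)$ vanishes instead of being strictly negative; here one argues that the second boundary intersection of $L$ must lie on a side non-incident to $v$, which forces the \emph{other} opposite pair to be strictly separated, and the required disjunction is recovered. Once this equivalence is established, the lemma follows, and, exactly as in the regular case, one feeds the resulting quantified formula in $(t_1,t_2,t_3)$ to the quantifier-elimination program to conclude the desired non-existence of an intersecting plane $e$.
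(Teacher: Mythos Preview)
Your approach mirrors the paper's own (the paper simply declares the singular case ``completely analogous'' to Lemma~\ref{lem:formula_1}), and the overall structure---intersect $e$ with the supporting plane of each trapezoid and argue that the resulting line separates some diagonal pair---is correct. However, there is a concrete error in your $\Rightarrow$ argument: you assert that ``the normals of the four trapezoid-planes above all have vanishing first coordinate.'' This is false for $C_2^{(3)}$ and $C_3^{(3)}$, which lie in the planes $x=-z$ and $x=z$ with normals $(1,0,1)$ and $(1,0,-1)$ respectively; these have first coordinate~$1$, not~$0$. Hence your argument does not exclude that $e$ (with normal $(1,t_1,t_2)$) coincides with one of these two planes---for instance $t_1=0$, $t_2=1$, $t_3=0$ gives exactly $e=\{x+z=0\}=\mathrm{plane}(C_2^{(3)})$.

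The repair is short. If $e$ is parallel to $\mathrm{plane}(C_i^{(3)})$ but distinct from it, then $e\cap C_i^{(3)}=\emptyset$, contradicting the hypothesis. If $e$ actually equals $\mathrm{plane}(C_2^{(3)})=\{x+z=0\}$, then $e\cap\mathrm{plane}(C_3^{(3)})$ is the line $\{x=z=0\}$, which misses $C_3^{(3)}$ entirely since every point of $C_3^{(3)}$ has $x\geq 1$; the symmetric argument handles $e=\mathrm{plane}(C_3^{(3)})$. For $C_0^{(3)}$ and $C_1^{(3)}$ your original normal argument is valid as stated. With this correction in place, the remainder of your proof (including the careful treatment of the vertex-degenerate case, which the paper glosses over) goes through.
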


Again, it can be proven with a quantifier elimination program that the induced quantified formula
is false.

\end{appendix}

\end{document}